\documentclass[aps,prx,twocolumn,superscriptaddress,
groupedaddress]{revtex4}

\usepackage{dcolumn}
\usepackage{url}
\bibliographystyle{unsrtnat}
\expandafter\let\csname equation*\endcsname\relax
\expandafter\let\csname endequation*\endcsname\relax
\usepackage{amsmath,amssymb}
\usepackage{amsthm}
\usepackage{graphicx,bm,color,hyperref}
\usepackage{mathrsfs}
\usepackage{dsfont}
\usepackage{mathtools}
\usepackage{bbm,bm}
\usepackage{mathdots}
\usepackage{xcolor}
\usepackage{tabularx}
\usepackage{booktabs}
\usepackage{hyperref}
\usepackage{amssymb}
\usepackage{amsfonts}
\usepackage{changes}
\usepackage{subfigure}
\usepackage{xcolor}
\usepackage{float}
\usepackage{cleveref}

\usepackage{tikz}

\usepackage{changes}

\newtheorem{definition}{Definition}
\newtheorem{theorem}{Theorem}

\newtheorem{lemma}[theorem]{Lemma}

\newtheorem{corollary}{Corollary}
\newtheorem{proposition}{Proposition}
\usepackage{cleveref}
\crefname{theorem}{Theorem}{Theorems}
\crefname{proposition}{Proposition}{Propositions}
\crefname{definition}{Definition}{Definitions}
\crefname{lemma}{Lemma}{Lemmas}
\crefname{figure}{Figure}{Figures}
\crefname{corollary}{Corollary}{Corollary}
\crefname{conjecture}{Conjecture}{Conjectures}
\crefname{section}{Section}{Sections}
\crefname{appendix}{Appendix}{Appendixes}
\crefname{observation}{Observation}{Observation}
\crefname{remark}{Remark}{Remark}
\crefname{example}{Example}{Examples}
\crefname{equation}{Eq.}{Eqs.}
\crefname{table}{Table}{Tables}
\crefname{theorem}{Theorem}{Theorems}

\newcommand\dloc{q} 
\DeclareMathOperator{\hiH}{\mathcal{H}} 

\DeclareMathOperator{\e}{\mathrm{e}}
\DeclareMathOperator{\iu}{\mathrm{i}}
\DeclareMathOperator{\Tr}{\mathrm{Tr}}

\newcommand{\ket}[1]{\vert #1 \rangle}
\newcommand{\bra}[1]{\langle #1 \vert}
\newcommand{\ketbra}[2]{\vert #1 \rangle\langle #2\vert}
\newcommand{\braket}[2]{\langle #1 \vert #2 \rangle}

\newcommand{\floor}[1]{\left\lfloor #1 \right\rfloor}
\newcommand\dH{d_H} 

\newcommand{\1}{\ensuremath{\mathbbm{1}}}
\renewcommand{\emph}{\textit}
\DeclareMathOperator{\rank}{rank}

\newcommand\uni {\mbox{-}\mathrm{UNI}}

\newcommand\Op{\mathcal{O}} 

\newcommand\nI{n^\mathrm{\star}} 
\newcommand\nII{n^\mathrm{\star \! \star}}
\newcommand\nIII{n^\mathrm{\star \! \star \! \star}} 
\newcommand\AI{A^\mathrm{\star}} 
\newcommand\AII{A^\mathrm{\star \! \star}} 
\newcommand\Aetc{A^\mathrm{\star \! \star \dots \star}} 
\newcommand\kI{k^\mathrm{\star}} 
\newcommand\kII{k^\mathrm{\star \! \star}} 

\begin{document}

\title{General stabilizer approach for constructing highly entangled graph states}
 \author{Zahra Raissi }
 \affiliation{Department of Physics, Virginia Tech, Blacksburg, VA 24061, USA}
\affiliation{Virginia Tech Center for Quantum Information Science and Engineering, Blacksburg, VA 24061, USA}
 
 \affiliation{ICFO-Institut de Ciencies Fotoniques, The Barcelona Institute of Science and Technology, Castelldefels (Barcelona), 08860, Spain}
 
 \author{Adam Burchardt}
\affiliation{QuSoft, CWI and University of Amsterdam, Science Park 123, 1098 XG Amsterdam, the Netherlands}
  \author{Edwin Barnes}
 \affiliation{Department of Physics, Virginia Tech, Blacksburg, VA 24061, USA}
\affiliation{Virginia Tech Center for Quantum Information Science and Engineering, Blacksburg, VA 24061, USA}

\begin{abstract}
Highly entangled multipartite states such as $k$-uniform ($k$-UNI) and absolutely maximally entangled (AME) states serve as critical resources in quantum networking and other quantum information applications. 
However, there does not yet exist a complete classification of such states, and much remains unknown about their entanglement structure.
Here, we substantially broaden the class of known $k$-UNI and AME states by introducing a method for explicitly constructing such states that combines classical error correcting codes and qudit graph states. This method in fact constitutes a general recipe for obtaining multipartitite entangled states from classical codes. Furthermore, we show that at least for a large subset of this new class of $k$-UNI states, the states are inequivalent under stochastic local operations and classical communication. This subset is defined by an iterative procedure for constructing a hierarchy of $k$-UNI graph states.
\end{abstract}

\maketitle

\section{Introduction}
Multipartite entanglement is at the very heart
of quantum information theory.
In recent years, significant effort has been devoted to characterizing the entanglement properties of multipartite quantum states and constructing new examples of highly entangled states \cite{three-qubit,four-qubit,three-qutrit,LOCC-symmetry,Scott2004,Zahra-min-support,Zahra-non-min-support,BurchardtRaissi20}. 
Much of this effort has focused on special classes of states such as graph states \cite{Hein,Hein-2006}, $k$-uniform ($k \uni$) and absolutely maximally entangled (AME) states \cite{Helwig-graphstates,Helwig-Cui,Zahra-min-support,Zahra-non-min-support}, which are critical resources for measurement-based quantum computing \cite{Raussendorf-Briegel}, quantum networking \cite{based-quantum-repeater,Helwig-Cui,photonic quantum repeaters}, and quantum error correction \cite{Hein-2006,Zahra-min-support,QECC-with-graph,Grassl-Graph,Stabilizer-Grph-codes,ZahraQECC}. 

Graph states are multipartite stabilizer states in which each vertex of a given graph represents a qudit, and the graph adjacency matrix defines the stabilizer generators \cite{Hein,Hein-2006}. $k \uni$ states are highly entangled pure states that have the property that all of their $k$-qudit reduced density matrices are maximally mixed \cite{Latorre,Preskill,ZahraQECC,Dardo-Karol,DardoQOA}. That is, the state $\ket \psi$ is a $k \uni$ state if 
 \begin{equation} \nonumber
   \rho_S = \Tr_{S^c} \ketbra\psi\psi \propto \1 \qquad \forall S  \subset \{1,\ldots,n\}, |S| \leq k \ ,
  \end{equation}
where $S^c$ denotes the complementary set of $S$. AME states correspond to the special case where $k=\floor{n/2}$, so that completely mixed states are obtained for any bipartition of the $n$ qudits. 

In light of the importance of graph and $k \uni$ states, it is interesting to consider what states lie at the intersection of these two classes. In addition to enabling interesting applications, $k$-UNI states that admit a graphical description can provide a powerful framework for investigating the entanglement structure of $k$-UNI states more generally. Moreover, the application of graph state techniques can facilitate finding new examples of $k$-UNI states, an endeavor that is generally not straightforward. So far, the only known systematic method of constructing $k \uni$ graph states is to start from a $k \uni$ state defined in terms of classical error correcting codes and to Fourier transform this into a graph state. This approach only yields a limited class of states corresponding to complete bipartite graphs \cite{Helwig-graphstates,Helwig-Cui,Zahra-min-support}. Whether or not a broader class of $k \uni$ graph states exists and how to find it has remained unclear (in this regards also see \cite{Zahra-non-min-support}).    

In this work, we uncover a large class of entangled states that are both graph states and $k \uni$ or AME states. We do this by finding a general set of constraints on the graph adjacency matrix that guarantee the resulting state is $k \uni$. These constraints allow us to go well beyond the special case of complete bipartite graphs. We show that explicit examples of such states can be obtained by constructing the adjacency matrix from classical error correcting codes. We then focus on a particular subset of these states that can be generated by applying an iterative procedure to produce a hierarchy of $k \uni$ and AME graph states. We show that at each level of this iterative process, the resulting state remains $k \uni$, but the states from different levels cannot be converted into each other using stochastic local operations and classical communication (SLOCC). 

The paper is organized as follows. 
In Sec.~\ref{sec:kuni-from-codes}, we first review the method of constructing $k \uni$ and AME states from classical linear codes, then in Sec.~\ref{sec:graph} we present our general method of finding $k \uni$ graph states.
In Sec.~\ref{sec:hierarchy}, we present the hierarchical graph state procedure.
In Sec.~\ref{sec:generalized-kuni}, we show that the generalized method of constructing $k \uni$ states from codes is the first level of this hierarchical procedure. 
Finally, in Sec.~\ref{sec:inequivalence} we show the states constructed at different levels of the hierarchy belong to different SLOCC classes.

\section{Constructing $k \uni$ states from classical codes}\label{sec:kuni-from-codes}
The connection between classical codes and $k \uni$ states has been shown to provide a systematic method of constructing a large set of $k\uni$ states \cite{Scott2004,Zahra-min-support,Zahra-non-min-support}. 
In this method, starting from a suitable classical code, a $k \uni$ state of $n$ qudits with local dimension $\dloc$ ($q$ level quantum systems also called qudits), is obtained by forming an equal superposition of the computational basis states corresponding to all of the codewords  $\ket{\vec{c}_i}$ of the code: 
  \begin{equation}\label{eq:state-n,0}
   \ket{\phi_{n,0}} = \sum_{i=1,\dots , \dloc^k} \ket{\vec{c}_i} 
   = \sum_{i=1,\dots , \dloc^k} \ket{c^{(i)}_1, \dots , c^{(i)}_n} \ ,
  \end{equation}
where we use that in the language of coding theory, a classical linear code encodes messages into a subset of codewords denoted by the vector $\vec{c}_i = (c^{(i)}_1, \dots , c^{(i)}_n)$ \cite[Chapter~1]{MacWilliams}  (see Appendix~\ref{app:classical linear codes} for the explicit definition and more details).
In the corresponding classical codes, we only consider $k \leq n/2$ as we focus on $k \uni$ states. The first index in $\ket{\phi_{n,0}}$ indicates the number of qudits $n$; the purpose of the second index (here equal to 0) will become clear later when we extend each of the states in Eq.~\eqref{eq:state-n,0} to a new family of $k \uni$ states. Here are some explicit examples of AME states constructed using the approach shown in Eq.~\eqref{eq:state-n,0}:
  \begin{equation}\label{eq:exaples-state-n,0}
  \begin{split}
    \ket{\phi_{2,0}} &=  \sum_{\alpha = 1,  \dots, \dloc} \, \ket{\alpha, \alpha} ,\ \ \ \ \ 
    \ket{\phi_{3,0}} = \sum_{\alpha = 1, \dots, \dloc} \, \ket{\alpha, \alpha ,\alpha}, \
  \end{split}
  \end{equation}
which are Bell and GHZ states, respectively.

How can we check if a given pure state is $k \uni$?
For this, we discuss two equivalent approaches.
The first is to check if all the reduced density matrices for up to $k$ qudits are maximally mixed. 
In this case, further information regarding the code parameters can be used to prove that the state in Eq.~\eqref{eq:state-n,0} is a $k \uni$ state (for more details see \cite{Zahra-min-support,Zahra-non-min-support}).
Another approach is based on the structure of the stabilizer formalism (see Appendix~\ref{app:stabilizer} for an overview of the stabilizer formalism): 
It is known that a pure stabilizer state of $n$ qudits is a $k \uni$ state if and only if, in all its stabilizer generators and arbitrary products of them, identity operators appear on at most $n-k-1$ different qudits (except for the trivial stabilizer operator given by the tensor product of $n$ identity operators).
In this method, one needs to check all possible products of the stabilizer generators. 
In the following, we will make heavy use of this second method to prove that a given graph state is $k \uni$.  
Although at first glance this method would seem to require checking exponentially many stabilizer operators, in practice it suffices to only check the stabilizer generators. This is because multiplying stabilizer operators does not increase the number of identity operators for the states we consider (more details can be found in Appendix~\ref{app:stabilizers-of-graph-codes}).

\section{Graph states}\label{sec:graph-states}\label{sec:graph}
Graph states are pure quantum states that are defined based on a graph. A graph $G=(V, \Gamma)$ is composed of a set $V$ of $n$ vertices (each qudit is represented by a vertex), and a set of weighted edges specified by the \emph{adjacency matrix} $\Gamma$ \cite{Nest,Hein,Hein-2006,Bahramgiri}.
$\Gamma$ is an $n \times n$ symmetric matrix such that $\Gamma_{i,j}=0$ if vertices $i$ and $j$ are not connected and $\Gamma_{i,j} >0$ otherwise. 
The graph state associated with a given graph $G$ is the $+1$ eigenstate of the following set of stabilizer generators  \cite{Nest,Hein,Hein-2006,Bahramgiri}: 
 \begin{equation}\nonumber
   S_i = X_i \prod_j (Z_j)^{\Gamma_{i,j}} , \qquad 1 \leq i \leq n \, ,
 \end{equation}
 where the operators $X$ and $Z$ are generalized Pauli operators acting on qudits with $\dloc$ levels.
 $X$ and $Z$ are unitary, traceless, and they satisfy the conditions $X^\dloc = Z^\dloc = \1$ and $ZX=\omega XZ$, where $\omega=\e^{\iu 2 \pi/\dloc}$ is a $\dloc$-th root of unity (see Appendix~\ref{app:stabilizer}).

We first briefly describe how we can convert the $k \uni$ state $\ket{\phi_{n,0}}$ in Eq.~\eqref{eq:state-n,0} into a graph state.
It has been shown in \cite{Zahra-non-min-support} that by performing local Fourier transforms $F_i = \sum_{j} \omega^{i j} \ketbra{j}{i} $ on all the last $n-k$ qudits of the $k \uni$ state $\ket {\phi_{n,0}} = \sum_{i} \ket{\vec{c}_i}$, the resulting state is a graph state corresponding to a \emph{complete bipartite graph}. An example of such a graph is depicted in part (a) of Table \ref{tabal-3-cases}. 
In this case, the adjacency matrix is 
 \begin{equation}\label{eq:adjacency-Gamma-n,0}
   \Gamma^{(n,0)} = \left[  \begin{array}{c|c}
    0 & -A^{\phantom{T}}  \\
    \hline
    -A^T & 0\\
 \end{array} \right] \ .
 \end{equation}
The matrix $A$ is directly related to the codewords in Eq.~\eqref{eq:state-n,0}. In particular, these codewords are obtained from 
  $\vec{c}_i = (\vec{x}_i \, G_{k \times n}) = (\vec{x}_i ,\, \vec{x}_i \,  A)$,
where $\vec{x}_i $ is a vector of size $k$, $G_{k\times n} = [\1_k |A]$, and $A$ is a matrix of size $k \times (n-k)$ that generates codewords. 
Note that, for a given classical code, several techniques are known for finding a suitable $A$ matrix \cite[Chapter~11]{MacWilliams} \cite{Zahra-min-support,Zahra-non-min-support} (also see Appendix~\ref{app:classical linear codes}). As an example, we can consider the $2 \uni$ state $\ket{\phi_{6,0}}$ (an explicit expression is given in Eq.~\eqref{eq:phi(6-nq)} below).
In this state the codewords are $\vec{c}_{i} = (\alpha , \beta , \alpha + \beta , \alpha + 2\beta  , \alpha + 3\beta , \alpha + 4\beta)$, where $\vec{x}_i = (\alpha , \beta)$ and 
 \begin{equation}\nonumber
   A= \left[  \begin{array}{cccc}
    1 & 1 & 1 & 1  \\
    1 & 2 & 3 & 4 \\
 \end{array} \right] \ .
 \end{equation}

Instead of starting from the $k \uni$ state in Eq.~\eqref{eq:state-n,0} and converting it into a graph state, we could alternatively start from a graph state with an adjacency matrix as in Eq.~\eqref{eq:adjacency-Gamma-n,0} and ask what properties must $A$ satisfy in order for this graph state to be $k \uni$? The answer is that every submatrix of $A$ must be nonsingular (see Appendix~\ref{app:prooftheorems} for details). 
It is generally challenging to find $A$ matrices that satisfy this property.
However, in the theory of classical error correcting codes, there is a systematic method that allows us to find them for some bounds (see Appendix~\ref{app:classical linear codes} for more details), and so we can use these matrices to create $k \uni$ graph states.
%

This alternative perspective in which we start from a graph and ask what properties must the adjacency matrix satisfy in order to obtain a $k \uni$ state allows us to uncover a larger class of $k \uni$ states. This larger class is summarized by the following theorem:
   \begin{theorem}\label{thm:general-adjacency}
    A graph state of $n$ qudits with local dimension $\dloc$ defined by the adjacency matrix 
    \begin{equation}\label{eq:general-adjacency}
       \Gamma^{n} = \left[  \begin{array}{c|c}
    0 & -A^{\phantom{T}}  \\
    \hline
    -A^T & \color{blue}B \\
 \end{array} \right] \ ,
    \end{equation}
    where $B$ is an arbitrary matrix and all submatrices of $A$ are nonsingular, is a $k \uni$ state.
  \end{theorem}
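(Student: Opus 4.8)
The plan is to use the stabilizer characterization of $k \uni$ states recalled in Sec.~\ref{sec:kuni-from-codes}: a pure stabilizer state is $k \uni$ if and only if every nontrivial element of its stabilizer group acts as the identity on at most $n-k-1$ qudits, i.e.\ has support of size at least $k+1$. So the whole task reduces to lower-bounding the minimal support among the products of the graph-state generators $S_i = X_i \prod_j (Z_j)^{\Gamma^n_{i,j}}$. First I would parametrize these products: for $\vec a \in \Z_\dloc^n$ the element $\prod_i S_i^{a_i}$ equals, up to an irrelevant phase, $\prod_\ell X_\ell^{a_\ell} Z_\ell^{(\Gamma^n \vec a)_\ell}$, so (since $X^a Z^b$ is traceless, hence not proportional to $\1$, unless $a\equiv b\equiv 0$) qudit $\ell$ is \emph{outside} the support precisely when $a_\ell = 0$ and $(\Gamma^n \vec a)_\ell = 0$. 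Splitting $\vec a = (\vec a^{(1)}, \vec a^{(2)})$ according to the block structure (the first $k$ versus the last $n-k$ vertices), the block form of $\Gamma^n$ gives $\Gamma^n \vec a = (-A \vec a^{(2)},\, -A^T \vec a^{(1)} + B \vec a^{(2)})$. Thus a first-block qudit lies in the support iff $\vec a^{(1)}_\ell \neq 0$ or $(A \vec a^{(2)})_\ell \neq 0$, and a second-block qudit lies in the support iff $\vec a^{(2)}_\ell \neq 0$ or $(- A^T \vec a^{(1)} + B\vec a^{(2)})_\ell \neq 0$. The key structural point is that $B$ enters only the second block and only additively, so it can never remove a qudit from the support that is already forced in by a nonzero entry of $\vec a^{(2)}$; this is why an arbitrary $B$ does no harm.

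The technical heart is the following MDS-type lemma, which I would prove by a short contradiction argument: if every square submatrix of $A$ is nonsingular, then (a) $\wgt(A^T \vec v) \geq (n-k)+1-\wgt(\vec v)$ for every nonzero $\vec v \in \Z_\dloc^k$, and (b) $\wgt(A \vec u) \geq k+1-\wgt(\vec u)$ for every nonzero $\vec u \in \Z_\dloc^{n-k}$ with $\wgt(\vec u) \leq k$. For (a): if $A^T \vec v$ vanished on a coordinate set $U$ with $|U| \geq \wgt(\vec v) =: w$, pick $U' \subseteq U$ and $T = \mathrm{supp}(\vec v)$ with $|U'| = |T| = w$; then $(A^T\vec v)|_{U'} = (A_{T,U'})^T\,\vec v|_T = 0$ with $A_{T,U'}$ a nonsingular $w\times w$ submatrix, forcing $\vec v|_T = 0$, a contradiction. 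Statement (b) is identical with rows and columns interchanged. (Over $\Z_\dloc$ with $\dloc$ composite this remains valid once "nonsingular" is read as "determinant a unit," so a nonsingular matrix still has trivial kernel.)

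The proof then finishes by a two-case split on whether $\vec a^{(2)}$ vanishes. If $\vec a^{(2)} = 0$ then $\vec a^{(1)} \neq 0$ with $w_1 := \wgt(\vec a^{(1)}) \leq k \leq n-k$; the support contains the $w_1$ first-block qudits where $\vec a^{(1)}$ is nonzero and, by (a), at least $(n-k)+1-w_1$ second-block qudits, a total of at least $n-k+1 \geq k+1$. If $\vec a^{(2)} \neq 0$, set $w_2 := \wgt(\vec a^{(2)})$: when $w_2 \geq k+1$ the nonzero entries of $\vec a^{(2)}$ alone put $\geq k+1$ second-block qudits in the support; when $1 \leq w_2 \leq k$, statement (b) gives at least $k+1-w_2$ first-block qudits in the support plus at least $w_2$ second-block qudits, again totalling at least $k+1$. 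In every case the support has size $\geq k+1$, so the state is $k \uni$, which is the claim of Theorem~\ref{thm:general-adjacency}.

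I expect the main obstacle to be the MDS-type lemma and the bookkeeping around it: one must correctly identify the "all square submatrices nonsingular" hypothesis as exactly the condition making $[\1_k\,|\,A]$ (and its transpose counterpart) generate an MDS code, handle the auxiliary regime $\wgt(\vec a^{(2)}) > k$ where $A\vec a^{(2)}$ may even vanish (there the second block already supplies enough support on its own), and keep track of which block each contribution to the support comes from so that the two lower bounds are added over disjoint vertex sets. Everything else — the phase-free reduction to $\vec a$ and $\Gamma^n\vec a$, and the block multiplication — is routine.
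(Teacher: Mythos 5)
Your proposal is correct and follows essentially the same route as the paper's proof in Appendix~\ref{app:prooftheorems}: the stabilizer-support characterization of $k\uni$ states, an MDS-type lemma stating that the all-submatrices-nonsingular property of $A$ forces weight bounds on $A\vec{u}$ and $A^T\vec{v}$ (the paper phrases this as ``any linear combination of $t$ rows has at most $t-1$ vanishing elements''), and a two-case split on whether the exponents on the last $n-k$ generators vanish, with the observation that $B$ only adds $Z$-contributions on qudits already forced into the support by nonzero $X$-exponents. Your explicit handling of the regime $\wgt(\vec{a}^{(2)})>k$, where the submatrix argument is unavailable but the $X$-support alone already exceeds $k$, is a minor point the paper glosses over (its bound $k-t+1$ is simply nonpositive there), but the substance of the argument is the same.
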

This theorem says that for every $k \uni$ graph state defined by the adjacency matrix $\Gamma^{(n,0)}$ from Eq.~\eqref{eq:adjacency-Gamma-n,0}, there is an infinite family of additional $k \uni$ states that can be obtained by replacing the lower-right block of zeros by an arbitrary matrix $B$ in the adjacency matrix. In Appendix~\ref{app:prooftheorems}, we prove  Theorem~\ref{thm:general-adjacency} by showing that any product of the stabilizer generators defined by $\Gamma^n$ has identity operators on at most $n-k-1$ qudits regardless of what $B$ is, so long as all the submatrices of $A$ are nonsingular.
Although Theorem~\ref{thm:general-adjacency} establishes sufficient conditions for a state to be $k$-UNI, we strongly suspect that the requirement that $A$ contains only nonsingular submatrices is also necessary for $k$-uniformity. We leave a rigorous proof of this to future work.

Given this new class of $k \uni$ states, a natural question to ask is whether all these states are locally equivalent or not. While this question is hard to answer for all possible choices of $B$, we can obtain a definite answer at least for a large set of $B$'s that exhibit a certain hierarchical structure. In what follows, we first describe this class of ``hierarchical graph states" and show how these states can be obtained from the $k \uni$ states of Eq.~\eqref{eq:state-n,0} by applying certain operators. We then demonstrate that these hierarchical graph states are inequivalent under SLOCC.

\section{Hierarchical graph states}\label{sec:hierarchy}
Now we are ready to define hierarchical graph states in which we consider particular forms of $B$ in the adjacency matrix $\Gamma^n$, Eq.~\eqref{eq:general-adjacency}.
Here, we will show that by following a recursive pattern, we can iteratively construct a series of different graph states that are all $k \uni$ states of $n$ qudits.
With each iteration, a new complete bipartite subgraph is included in the graph, and the number of edges increases. Starting from a given $k \uni$ state $\ket{\phi_{n,0}}$, we denote the new graph states obtained in each iteration as follows:
 \begin{equation}\label{eq:hierarchy_of_states}
    \ket{\phi_{n,\nI}} , \ \ket{\phi_{n,\nI, \nII}} ,  \ \ket{\phi_{n,\nI, \nII ,\nIII}} , \ \dots \,
 \end{equation}
where the total number of qudits $n$ remains fixed.
The number of $\star$'s indicates the level of the hierarchy, while $n^\star$, $n^{\star\star}$, ... are the numbers of qudits involved in the new bipartite subgraph that forms at that iteration.
In Table~\ref{tabal-3-cases} we present three levels of iteration. 
The qudits involved in the new bipartite subgraphs at the first and second iterations are shown in blue and red, respectively. 

The first iteration yields the $k \uni$ state $\ket{\phi_{n,\nI}}$.
This protocol proceeds as follows: We start from the adjacency matrix $\Gamma^{(n,0)}$ (Eq. \eqref{eq:adjacency-Gamma-n,0}) containing four blocks, two of which are matrices $-A$ and $-A^T$, and the remaining blocks are $0$'s.
In the first iteration, we replace part of one zero-block by the new adjacency submatrix $\Gamma^{(\nI,0)}$. More explicitly, we take the adjacency matrices:
\begin{equation}\nonumber
   \Gamma^{(n,0)} = \left[  \begin{array}{c|c}
    0 & -A^{\phantom{T}}  \\
    \hline
    -A^T & 0\\
 \end{array} \right], \ \text{and} \ 
    \Gamma^{(\nI,0)} = \left[  \begin{array}{c|c}
    0 & -{\AI}^{\phantom{T}}  \\
    \hline
    -{\AI}^T & 0\\
 \end{array} \right] \ ,
 \end{equation}
and construct the following new adjacency matrix
  \begin{equation}\label{eq:adjacency-Gamma-n,nq}
  \begin{split}
   \Gamma^{(n,\nI)} &=
    \left[  \begin{array}{c|c}
    0 & -A^{\phantom{T}}  \\
    \hline
    -A^T & 
        \color{blue}
         \begin{array}{c|c}    
     0 & 0\\
      \hline
     0 & \Gamma^{(\nI,0)}
      \end{array}
   \\
 \end{array} \right]  \\
    &=  
    \left[  \begin{array}{c|c}
    0 & -A^{\phantom{T}}  \\
    \hline
    -A^T & 
    \color{blue}
     \begin{array}{c|c}    
     0 & 0\\
      \hline
     0 & 
     \begin{array}{c|c}    
     0 & -\AI\\
      \hline
     -{\AI}^T & 0
      \end{array}   
      \end{array}
    \\
 \end{array} \right] \, .
   \end{split}
 \end{equation}
This yields the state $\ket{\phi_{n,\nI}}$, which is a $k \uni$ state of $n$ qudits.
Note that the size of the matrix $\Gamma^{(n,0)}$ is $n \times n$ while $\Gamma^{(\nI,0)}$ is a $\nI \times \nI$ matrix, and here we have assumed that $\nI \leq n-k$. 
In the case of $\nI=n-k$, the matrix $\Gamma^{(n,\nI)}$ can simply be written as
  \begin{equation}\nonumber
  \begin{split}
   \Gamma^{(n,\nI)} &=
    \left[  \begin{array}{c|c}
    0 & -A^{\phantom{T}}  \\
    \hline
    -A^T & 
        \color{blue}
          \Gamma^{(\nI,0)}
   \\
 \end{array} \right]  \\
    &=  
    \left[  \begin{array}{c|c}
    0 & -A^{\phantom{T}}  \\
    \hline
    -A^T & 
    \color{blue}
     \begin{array}{c|c}    
     0 & -{\AI}\\
      \hline
     -{\AI}^T & 0
      \end{array}   
    \\
 \end{array} \right] \, .
   \end{split}
 \end{equation}
In the corresponding graph, the number of edges increases, such that a new complete bipartite graph with $\nI$ vertices forms inside the original graph state, as shown in part (b) of Table~\ref{tabal-3-cases}.
This procedure is described in further detail using the stabilizer formalism in Appendix~\ref{app:stabilizers-of-graph-general}. 

This procedure continues such that at every iteration, we replace some of the zeros of the last diagonal block of the previous step with a new adjacency submatrix.
To continue our example from above, in the second iteration, in order to construct the $k \uni$ state $\ket{\phi_{n,\nI,\nII}}$, we take the adjacency matrix $\Gamma^{(\nII,0)} = \left[  \begin{array}{c|c}
    0 & -{\AII}  \\
    \hline
    -{\AII}^T & 0\\
 \end{array} \right] $
that corresponds to the $\kII \uni$ graph state $\ket{\phi_{\nII,0}}$ and insert this into the adjacency matrix from the first iteration:
  \begin{equation}
  \label{eq:adjacency-Gamma-n,nI,nII}
  \begin{split}
   &\Gamma^{(n,\nI,\nII)} =  
    \left[  \begin{array}{c|c}
    0 & -A^{\phantom{T}}  \\
    \hline
    -A^T & 
    \color{blue}
     \begin{array}{c|c}    
     0 & 0\\
      \hline
     0 & 
     \begin{array}{c|c}    
     0 & -{\AI}\\
      \hline
     -{\AI}^T & 
            \color{red}
         \begin{array}{c|c}    
     0 & 0\\
      \hline
     0 & \Gamma^{(\nII,0)}
      \end{array}
      \end{array}   
      \end{array}
    \\
 \end{array} \right] 
 \\
 &= \left[  \begin{array}{c|c}
    0 & -A^{\phantom{T}}  \\
    \hline
    -A^T & 
    \color{blue}
     \begin{array}{c|c}    
     0 & 0\\
      \hline
     0 & 
     \begin{array}{c|c}    
     0 & -{\AI}\\
      \hline
     -{\AI}^T & 
            \color{red}
         \begin{array}{c|c}    
     0 & 0\\
      \hline
     0 & 
     \begin{array}{c|c}    
     0 & -{\AII}\\
      \hline
     -{\AII}^T & 0
      \end{array}   
      \end{array}
      \end{array}   
      \end{array}
    \\
 \end{array} \right] \ .
   \end{split}
 \end{equation}
The above adjacency matrix represents the state $\ket{\phi_{n,\nI , \nII }}$, and the associated graph is shown in part (c) of Table \ref{tabal-3-cases}. 
This procedure also applies equally well to higher orders; here we focus on the 0th and 1st orders of the hierarchy only for the sake of simplicity.
Therefore, iterating this process further, we can in the next step construct the state $\ket{\phi_{n,\nI, \nII ,\nIII}}$.
And in general, by continuing to insert smaller and smaller submatrices into the adjacency matrix:
    \begin{equation}\label{eq:adjacency-Gamma-n,nI,nII,etc}
   \begin{split}
  &\Gamma^{(n,\nI , \nII , \dots)} =
  \\
  & \left[  \begin{array}{c|c}
    0 & -A^{\phantom{T}}  \\
    \hline
    -A^T & 
    \color{blue}
     \begin{array}{c|c}    
     0 & 0\\
      \hline
     0 & 
     \begin{array}{c|c}    
     0 & -{\AI}\\
      \hline
     -{\AI}^T & 
            \color{red}
   \begin{array}{c|c}    
     0 & 0\\
      \hline
     0 & 
         \begin{array}{cc}    
  \ddots & \\
   & 
   \color{violet}
     \begin{array}{c|c}    
     0 & -{\Aetc}\\
      \hline
     -{\Aetc}^T & 0
      \end{array}   
      \end{array}
      \end{array}
      \end{array}   
      \end{array}
    \\
 \end{array} \right] ,
   \end{split}
  \end{equation}
each time we add more edges while preserving the number of vertices/qudits.

\section{generalizing the method of constructing states from codes}\label{sec:generalized-kuni}
Next, we show that the state obtained after the first iteration of the hierarchical procedure described in the previous section, $\ket{\phi_{n,\nI}}$, can also be obtained directly from Eq.~\eqref{eq:state-n,0} by applying certain operators to it. This observation allows us to show that some of the states in the hierarchy are inequivalent under SLOCC. We can additionally view the result of this section as a more general method for constructing $k \uni$ and AME states directly from classical codes compared to the method summarized by Eq.~\eqref{eq:state-n,0}. Here, the codewords act as a more general resource, rather than simply entering into an equal superposition as in Eq.~\eqref{eq:state-n,0}.
In the following we discuss this in more detail and provide closed-form expressions for these new $k \uni$ states.

To find the closed-form expression of the state $\ket{\phi_{n,\nI}}$, we first introduce operator $\Op_{\nI}$ based on its action on a given product state $\ket{i_1,\dots , i_{\nI}} $: 
  \begin{equation}\label{eq:Operator}
   \begin{split}
   & \Op_{\nI} \ket{i_1,\dots , i_{\nI}} \coloneqq \\
   & Z^{-i_1} \otimes \dots \otimes Z^{-i_{\kI}} \otimes X^{i_{\kI+1}} \otimes \dots \otimes X^{i_{\nI}} \ket{\phi_{\nI,0}} \ ,
     \end{split}
  \end{equation}
where $\ket{\phi_{\nI,0}}$ is a $\kI \uni$ state of $\nI$ qudits as in Eq.~\eqref{eq:state-n,0}.
Note that, in this operation, the number of $Z$ operators is equal to $\kI$, while the number of $X$ operators is $\nI-\kI$.

Now we use operator $\Op_{\nI}$ to present the general method of constructing $k \uni$ states $\ket{\phi_{n,\nI \geq 2}}$ from classical codes.

  \begin{proposition}\label{thm:constructing-state-n,nI}
    Consider a $k \uni$ state $\ket{\phi_{n,0}}$ constructed from a classical linear code according to Eq.~\eqref{eq:state-n,0}, and the operator $\Op_{\nI}$ based on the definition given in Eq.~\eqref{eq:Operator}.
    The pure state
     \begin{equation} \label{eq:state-n-nI}
      \ket{\phi_{n,\nI}} \coloneqq \1^{(n-\nI)} \otimes \Op_{\nI} \ \ket{\phi_{n,0}} \, ,
     \end{equation}
   is a $k \uni$ state for any $2 \leq \nI \leq n-k$. 
  \end{proposition}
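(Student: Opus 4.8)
The plan is to show that the state $\ket{\phi_{n,\nI}}$ defined in Eq.~\eqref{eq:state-n-nI} is in fact the graph state with adjacency matrix $\Gamma^{(n,\nI)}$ of Eq.~\eqref{eq:adjacency-Gamma-n,nq}, and then invoke Theorem~\ref{thm:general-adjacency} (with $B$ being the block containing $\Gamma^{(\nI,0)}$) to conclude $k$-uniformity. First I would expand the right-hand side of Eq.~\eqref{eq:state-n-nI} explicitly: write $\ket{\phi_{n,0}} = \sum_i \ket{\vec c_i}$ with $\vec c_i = (\vec x_i,\, \vec x_i A)$, split each codeword into its first $\nI$-from-the-end qudits and the rest according to how $\Op_{\nI}$ acts, and use the definition Eq.~\eqref{eq:Operator} to see that on the last $\nI$ qudits the action produces $Z^{-c^{(i)}_j}$ on the first $\kI$ of them and $X^{c^{(i)}_j}$ on the remaining $\nI-\kI$, applied to $\ket{\phi_{\nI,0}}$. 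The key algebraic fact I will use is the standard one that conjugating/acting on a graph state by $Z$ and $X$ strings shifts the phases and the basis labels in a controlled way; concretely, $X^{a}\ket{\text{graph}}$ reproduces the graph state up to the stabilizer relations, and $Z^{-c}$ insertions combine with the $-A$ and $-\AI$ blocks. The cleanest route is to work directly in the stabilizer picture rather than with explicit amplitudes.

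Concretely, the main step is a stabilizer-conjugation computation. Let $S_1,\dots,S_n$ be the stabilizer generators of $\ket{\phi_{n,0}}$ built from $\Gamma^{(n,0)}$, and let $U = \1^{(n-\nI)}\otimes \Op_{\nI}$, interpreting $\Op_{\nI}$ as (implicitly) a Clifford-type operation composed of local $X$- and $Z$-powers together with the inclusion of the extra stabilizers of $\ket{\phi_{\nI,0}}$. I would show that $U$ maps the stabilizer group generated by $\{S_i\}$ together with the generators of $\ket{\phi_{\nI,0}}$ on the last $\nI$ qudits onto exactly the stabilizer group generated by the rows of $\Gamma^{(n,\nI)}$. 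The bookkeeping is: (i) the $X^{i_{j}}$ factors on the last $\nI-\kI$ qudits, when commuted through the $Z_j^{\Gamma_{i,j}}$ parts of the old generators, generate precisely the new $-\AI$ and $-\AI^T$ off-diagonal couplings in the blue block; (ii) the $Z^{-i_j}$ factors on the first $\kI$ of the last $\nI$ qudits account for the sign/phase conventions and for turning $\ket{\phi_{\nI,0}}$ into its graph-state form $\Gamma^{(\nI,0)}$; (iii) the outer $\1^{(n-\nI)}$ and the $-A$, $-A^T$ blocks are untouched. Since graph states are uniquely determined by their adjacency matrix, equality of stabilizer groups gives $\ket{\phi_{n,\nI}} = \ket{G(\Gamma^{(n,\nI)})}$ up to a global phase.

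Once the identification $\ket{\phi_{n,\nI}} = \ket{G(\Gamma^{(n,\nI)})}$ is established, the proof closes immediately: $\Gamma^{(n,\nI)}$ has exactly the form of Eq.~\eqref{eq:general-adjacency} with the same block $A$ (all of whose submatrices are nonsingular, since $\ket{\phi_{n,0}}$ is $k\uni$) and with $B$ the lower-right block $\begin{psmallmatrix}0&0\\0&\Gamma^{(\nI,0)}\end{psmallmatrix}$ — an allowed arbitrary choice — so Theorem~\ref{thm:general-adjacency} yields that it is $k\uni$. The constraint $2\le \nI \le n-k$ is exactly what makes the insertion well-defined: $\nI\le n-k$ so the submatrix $\Gamma^{(\nI,0)}$ fits inside the lower-right $(n-k)\times(n-k)$ zero block, and $\nI\ge 2$ so that $\ket{\phi_{\nI,0}}$ is a genuine (nontrivial) $\kI\uni$ state with $\kI\ge 1$. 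I expect the main obstacle to be step (i)–(ii) of the stabilizer computation: carefully tracking the commutation phases $\omega = \e^{\iu 2\pi/\dloc}$ and the sign conventions (the minus signs in $-A$, $-\AI$) through the $Z$- and $X$-power conjugations to verify that the resulting generators match the rows of $\Gamma^{(n,\nI)}$ exactly, rather than up to some spurious local correction; handling this for general qudit dimension $\dloc$ (not just prime) requires a little care, but it is the kind of bookkeeping already carried out in Appendices~\ref{app:stabilizers-of-graph-general} and~\ref{app:stabilizers-of-graph-codes}, which I would cite.
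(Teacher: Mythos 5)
Your proposal follows essentially the same route as the paper: Appendix~\ref{app:stabilizers-of-graph-general} writes out the stabilizer generators of $\ket{\phi_{n,\nI}}$ and shows, via local Fourier transforms on a subset of qudits, that the state is local-unitary-equivalent to the graph state of $\Gamma^{(n,\nI)}$, after which Theorem~\ref{thm:general-adjacency} (proved in Appendix~\ref{app:prooftheorems}) delivers $k$-uniformity exactly as you describe. The only nitpick is that the identification is a local-unitary equivalence rather than an equality up to global phase, which is harmless since $k$-uniformity is LU-invariant.
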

In this context, we can now view the indices of the state $\ket{\phi_{n,\nI}}$ as follows: The first index indicates the number of qudits, while the second index indicates the number of non-trivial local operators acting on the state.
In the case where $\nI=0$, the state is given by a superposition of all the codewords with phases all equal to $+1$ as in Eq.~\eqref{eq:state-n,0}. As an example, we present an explicit formula for constructing $2 \uni$ states $\ket{\phi_{6,2}}$, and $\ket{\phi_{6,3}}$ from a $2 \uni$ state $\ket{\phi_{6,0}}$ constructed from classical codes:
\begin{widetext}
  \begin{align} \label{eq:phi(6-nq)}
   \begin{split}
     \ket{\phi_{6,0}} &= \sum_{\alpha, \beta}
   \ket{\alpha , \beta , \alpha + \beta , \alpha + 2\beta  , \alpha + 3\beta , \alpha + 4\beta }, \\
   \ket{\phi_{6,2}} &= \1^{(4)} \otimes \Op_2 \ \ket{\phi_{6,0}}  = 
    \sum_{\alpha, \beta}
   \ket{\alpha , \beta , \alpha + \beta , \alpha + 2\beta} \otimes Z^{-(\alpha + 3\beta)} \otimes X^{(\alpha+4\beta)} \, \ket{\phi_{2,0}},\\
   \ket{\phi_{6,3}} &= \1^{(3)} \otimes \Op_3 \ \ket{\phi_{6,0}}  = 
    \sum_{\alpha, \beta}
   \ket{\alpha , \beta , \alpha + \beta } \otimes Z^{-(\alpha + 2\beta)} \otimes X^{(\alpha+3\beta)} \otimes X^{(\alpha+4\beta)} \, \ket{\phi_{3,0}},\\
  %
 \, ,
  \end{split}
\end{align}  
\end{widetext}
where the local dimension is $\dloc =5$, and the states $\ket{\phi_{2,0}}$, and $\ket{\phi_{3,0}}$ are given in Eq.~\eqref{eq:exaples-state-n,0}. 

In Appendix~\ref{app:stabilizers-of-graph-general} we provide a full stabilizer description of the state $\ket{\phi_{n,\nI}}$ and then show this state is local-unitary-equivalent to the graph state obtained from $\Gamma^{(n,\nI)}$.

\section{Inequivalence under SLOCC}\label{sec:inequivalence}
Now we show that the states obtained at the first two levels of the iteratively constructed hierarchy belong to different SLOCC classes.
We first consider $k \uni$ states with $k < n/2$ and show that two states constructed at the first level of the hierarchy belong to different SLOCC classes, as summarized by the following proposition:

  \begin{proposition}\label{prop:LOCC-firsthierarchy} 
Two $k \uni$ states $\ket{\phi_{n,0}}$ and $\ket{\phi_{n,{\nI}}}$ defined by Eq.~\eqref{eq:state-n-nI} belong to different SLOCC classes.
  \end{proposition}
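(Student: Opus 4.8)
The plan is to distinguish the two states by an SLOCC invariant, namely the rank of reduced density matrices. If $\ket\psi = (M_1\otimes\cdots\otimes M_n)\ket\phi$ with all $M_i$ invertible, then for every $S\subseteq\{1,\dots,n\}$ one has $\rho_S^{\psi} = M_S\,\rho_S^{\phi}\,M_S^{\dagger}$ with $M_S=\bigotimes_{i\in S}M_i$ invertible, so $\rank\rho_S$ is unchanged. Hence it suffices to exhibit a single subset $S$ on which $\rank\rho_S$ differs for $\ket{\phi_{n,0}}$ and $\ket{\phi_{n,\nI}}$. Both are local-unitary equivalent to graph states — $\ket{\phi_{n,0}}$ to the one with adjacency $\Gamma^{(n,0)}$ of Eq.~\eqref{eq:adjacency-Gamma-n,0}, and $\ket{\phi_{n,\nI}}$ to the one with adjacency $\Gamma^{(n,\nI)}$ of Eq.~\eqref{eq:adjacency-Gamma-n,nq} by the equivalence shown in Appendix~\ref{app:stabilizers-of-graph-general} — and for a qudit graph state with adjacency $\Gamma$ one has $\rho_S = q^{-|S|}\sum_{g} g$, the sum over stabilizer elements supported inside $S$; such elements are parametrized by the left kernel of the off-diagonal block $\Gamma_{S,S^c}$, so $\rank\rho_S = q^{\,r(S)}$ with $r(S):=\rank_{\mathbb{F}_q}\Gamma_{S,S^c}$. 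Thus I want an $S$ with $r^{(n,0)}(S)\neq r^{(n,\nI)}(S)$.

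First I would compute $r^{(n,0)}(S)$ for the complete bipartite graph. Splitting $S=S_A\sqcup S_B$ along the bipartition, $\Gamma^{(n,0)}_{S,S^c}$ is block-anti-diagonal with blocks $-A[S_A,S_B^c]$ and $-A[S_A^c,S_B]^{T}$; since every square submatrix of $A$ is nonsingular, $r^{(n,0)}(S)=\min(|S_A|,|S_B^c|)+\min(|S_B|,|S_A^c|)$, and an elementary case check identifies this with $\min(|S|,k,|S^c|)$. In particular, for $\ket{\phi_{n,0}}$ every reduced state on a set $S$ with $k<|S|\le n-k$ has rank exactly $q^{k}<q^{|S|}$, i.e.\ it is never maximally mixed.

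The core step is then to build, for $\ket{\phi_{n,\nI}}$, a subset $S$ with $k<|S|<n-k$ whose cut rank is $|S|$ (so that $\rho_S$ is maximally mixed). All edges of $\Gamma^{(n,\nI)}$ that are absent from $\Gamma^{(n,0)}$ lie inside the embedded block $\Gamma^{(\nI,0)}$, which is a complete bipartite graph $K_{\kI,\,\nI-\kI}$ on the last $\nI$ B-side vertices with weight matrix $\AI$ (again all submatrices nonsingular). I would choose $S$ so that $S^c$ \emph{separates} the two parts of this embedded block: put its $\kI$ ``left'' vertices together with a suitably sized set of A-side and plain-B-side vertices into $S$, and put all $\nI-\kI$ ``right'' vertices into $S^c$. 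Then the rows of $\Gamma^{(n,\nI)}_{S,S^c}$ indexed by the embedded-left vertices carry the (entrywise nonzero) rows of $-\AI$ in the $S^c$-columns indexed by the embedded-right vertices, columns on which the remaining rows can be arranged to vanish; a block-triangular rank count — in which the nonsingular-submatrix properties of $A$ and of $\AI$ force the relevant sub-blocks to have full row rank — then yields $r^{(n,\nI)}(S)=|S|$ with $|S|=k+1$. Comparing with the previous paragraph, $\rank\rho_S=q^{k+1}\neq q^{k}=\rank\rho_S$ for $\ket{\phi_{n,0}}$ on the same $S$, so the two states are SLOCC-inequivalent.

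The main obstacle is this last construction. The precise recipe for $S$ — how many A-side versus plain-B-side vertices to include, and whether any embedded-right vertices must be kept in $S$ — depends on the relative sizes of $k$, $\nI$, $\kI$ and $n$, and the block-triangular bookkeeping has to be done carefully so that the extra rank genuinely survives instead of being absorbed by the $K_{k,n-k}$ part; this is exactly where all the ``all submatrices nonsingular'' hypotheses are used. The hypothesis $k<n/2$ guarantees there is room for a non-maximal cut ($k<|S|<n-k$), and the extreme regimes (for instance $n$ close to $2k$, or $\nI$ at its maximal value $n-k$ with small $\kI$) are the ones most likely to require a separate, more delicate choice of $S$.
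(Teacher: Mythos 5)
Your invariant and your high-level strategy coincide with the paper's: both proofs use the SLOCC-invariance of the ranks of reduced density matrices, the bound $\rank\rho_S\le \dloc^{k}$ for every reduction of $\ket{\phi_{n,0}}$, and the exhibition of a subset $S$ with $|S|>k$ on which $\ket{\phi_{n,\nI}}$ is maximally mixed. Your route to that last fact is genuinely different: you reduce everything to the $GF(\dloc)$ cut-rank identity $\rank\rho_S=\dloc^{\rank\Gamma_{S,S^c}}$ for graph states, whereas the paper (Appendix~E) works with the expansion $\ket{\phi_{n,\nI}}=\sum_i\ket{c^{(i)}_1,\dots,c^{(i)}_{n-\nI}}\otimes\ket{\psi^{\nI}_i}$, the orthonormality of the $\kI\uni$ basis $\{\ket{\psi^{\nI}_i}\}$, and the MDS distance $\dH=n-k+1$ to kill the cross terms. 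Your first half is correct: the rank formula, and the evaluation $\rank\rho_S=\dloc^{\min(|S|,k,|S^c|)}$ for the complete bipartite graph $\Gamma^{(n,0)}$, both check out.

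The gap sits at the decisive step, and the one structural claim you make there is not right in general. You assert that the $S^c$-columns indexed by the embedded-right vertices $R$ are ``columns on which the remaining rows can be arranged to vanish.'' That is true for rows indexed by plain $B$-side vertices, but every $A$-side row of the cut matrix carries the entries $-A[i,R]$ there, and these are all nonzero because every $1\times 1$ submatrix of $A$ is nonsingular. So the block-triangular count you invoke is valid only when $S$ avoids the $A$-side entirely: taking $S=S_{B_0}\cup L$ with $|S_{B_0}|=k+1-\kI$ plain $B$-side vertices does give $\rank\Gamma^{(n,\nI)}_{S,S^c}\ge\min(|S_{B_0}|,k)+\kI=k+1$, but this requires $n-k-\nI\ge k+1-\kI$, which fails for instance when $\nI=n-k$. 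In the remaining regimes $S$ must contain $A$-side vertices, rows of $A$ and rows of $\AI$ then overlap in the same column block, and full row rank no longer follows from the separate nonsingular-submatrix hypotheses on $A$ and $\AI$ by bookkeeping alone: you must additionally rule out that a combination of rows of $A$ annihilating the $B_0$-columns lands in the row space of $\AI$ on the $R$-columns. That is precisely what the paper's case~(ii) supplies through the code distance (two codewords agreeing on the last $\nI$ symbols must differ in at least $n-k+1>k$ of the first $n-\nI$ symbols), and it is the ingredient missing from your argument. Both proofs share the restriction $|S|\le n/2$, so neither covers the AME boundary, which the paper treats separately in Proposition~3.
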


We first note that the Schmidt-rank vector of a multipartite pure state cannot be modified by SLOCC protocols \cite{rank}.
Also we know that for the state $\ket{\phi_{n,0}}$ and any subset $ S \subset \{1,\dots ,n\}$, the rank of the reduced density matrix satisfies \cite{Zahra-min-support,Zahra-non-min-support}
 \begin{equation} 
  \rank (\rho_S) \leq \dloc ^k \ ,
 \end{equation}
where $\rho_S=\Tr_{S^c} \ketbra{\phi_{n,0}}{\phi_{n,0}}$.
In order to prove proposition~\ref{prop:LOCC-firsthierarchy}, we show in Appendix \ref{app:inequivalence-SLOCC} that for specific subsets of size $|S|= k+ \kI \leq n/2$, the reductions are maximally mixed, 
where $k$ lies in the support of the first $n- \nI$ qudits, and $\kI$ lies in the last $\nI$ qudits.

Next we discuss the AME states that are special among multipartite entangled states and particularly interesting to study in terms of local equivalence classes.  
The long-standing question of whether or not it is possible for two AME states to belong to two distinct SLOCC classes was settled in Ref.~\cite{BurchardtRaissi20}, although no explicit method of construction were provided.
Here, we show this explicitly for two large sets of AME states constructed using the iterative procedure we introduced:
One is the state $\ket{\phi_{n,0}}$ from Eq.~\eqref{eq:state-n,0} that is constructed from classical codes, while the second state is $\ket{\phi_{n,\nI=2}}$, which is obtained at the first level of the hierarchy.
In both cases, we assume that the number of qudits $n$ is odd, in which case the following holds:

\begin{proposition} 
\label{TheoremAppF}
For any odd number of qudits $n$, the following two AME states, $\ket{\phi_{n,0}}$ and $\ket{\phi_{n,2}}$, are not SLOCC equivalent.
\end{proposition}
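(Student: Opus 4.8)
The plan is to reuse the SLOCC‐invariance of the Schmidt‐rank vector, exactly as in Proposition~\ref{prop:LOCC-firsthierarchy}, but now exploiting the extra rigidity that comes from $n$ being odd and the states being AME. For an AME state of $n$ qudits with $n$ odd, every bipartition of size $(n-1)/2$ versus $(n+1)/2$ gives a maximally mixed reduction, so the rank of every $\lfloor n/2\rfloor$-qudit reduction is exactly $\dloc^{\lfloor n/2\rfloor}$; the Schmidt‐rank vector at this level is therefore completely saturated and carries no information. The key observation will instead be about reductions of size $k+\kI = (n+1)/2$ (one more than half): for $\ket{\phi_{n,0}}$, the bound $\rank(\rho_S)\le \dloc^{k}=\dloc^{(n-1)/2}$ from Eq.~(17) holds for \emph{every} subset $S$ of that size, whereas for $\ket{\phi_{n,2}}$ the construction in Proposition~\ref{prop:LOCC-firsthierarchy} (specialized to $\nI=2$, $\kI=1$) produces at least one subset $S$ of size $k+1=(n+1)/2$ whose reduction is maximally mixed, hence has rank $\dloc^{(n+1)/2} > \dloc^{(n-1)/2}$.

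First I would make precise the claim that for $\ket{\phi_{n,0}}$ \emph{no} subset of size $(n+1)/2$ can have a maximally mixed reduction: since $\ket{\phi_{n,0}}$ is a superposition of $\dloc^{k}$ codewords with $k=(n-1)/2$, the total rank of any reduced density matrix is at most $\dloc^{k}=\dloc^{(n-1)/2}$, which is strictly smaller than $\dloc^{(n+1)/2}=\dloc^{|S|}$; a maximally mixed state on $|S|$ qudits would need full rank $\dloc^{|S|}$. So the multiset of ranks $\{\rank(\rho_S): |S|=(n+1)/2\}$ for $\ket{\phi_{n,0}}$ is bounded above by $\dloc^{(n-1)/2}$ entrywise. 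Next I would invoke the $\nI=2$ case of the argument already sketched for Proposition~\ref{prop:LOCC-firsthierarchy}: applying $\Op_2$ to the last two qudits (which carry a Bell pair $\ket{\phi_{2,0}}$ with $\kII=1$) one exhibits an explicit $S$ with $|S| = k + 1 = (n+1)/2$ — namely $k$ qudits drawn from the support of the first $n-2$ qudits together with one of the last two — for which $\rho_S \propto \1$ on a $\dloc^{(n+1)/2}$-dimensional space. The details of why this particular reduction is maximally mixed are precisely the computation deferred to Appendix~\ref{app:inequivalence-SLOCC}, so I would cite that appendix rather than redo it.

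With these two facts in hand the proposition follows immediately: the Schmidt‐rank vectors of $\ket{\phi_{n,0}}$ and $\ket{\phi_{n,2}}$ differ in the entry indexed by the subset $S$ above — $\dloc^{(n-1)/2}$ at most for the first state versus $\dloc^{(n+1)/2}$ for the second — and since the (unordered) Schmidt‐rank vector is an SLOCC invariant \cite{rank}, the two states cannot lie in the same SLOCC class. I would close by remarking that both states are genuinely AME by Theorem~\ref{thm:general-adjacency} (equivalently by the code construction, since $A$ here is an MDS generator matrix with all submatrices nonsingular and likewise for $\AI$), so this is a bona fide example of two inequivalent AME states of the same local dimension and the same odd $n$.

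The main obstacle I anticipate is not the rank bookkeeping but verifying that the distinguished subset $S$ of size $(n+1)/2$ actually yields a maximally mixed reduction for $\ket{\phi_{n,2}}$ while respecting the constraint $|S|\le n/2$ in the \emph{strict} $k$-UNI statement — here $|S| = (n+1)/2 > n/2$, so this is a statement \emph{beyond} $k$-uniformity and must be checked by hand rather than inherited from $k$-UNI-ness of $\ket{\phi_{n,2}}$. Concretely one must show that the operator $\Op_2$, which replaces the trivial superposition on the last two qudits by $Z^{-c_{n-1}}X^{c_n}\ket{\phi_{2,0}}$ with the exponents linear in the message, correlates those two qudits with the first $n-2$ in such a way that tracing out the complement of $S$ still leaves no residual structure; this is where the nonsingularity of all submatrices of $A$ (MDS property of the underlying code) is used, to guarantee that the relevant submatrix of the generator matrix has full rank and the partial traces collapse to the identity. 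I would isolate this as the one lemma that genuinely needs the appendix and treat everything else as a short corollary.
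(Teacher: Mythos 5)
There is a genuine and fatal gap. Your argument hinges on exhibiting, for $\ket{\phi_{n,2}}$, a subset $S$ of size $k+\kI=k+1=(n+1)/2$ whose reduction is maximally mixed, hence of rank $\dloc^{(n+1)/2}$. No such subset can exist for \emph{any} pure state of $n=2k+1$ qudits: by the Schmidt decomposition across the bipartition $S\,|\,S^c$, the nonzero spectra of $\rho_S$ and $\rho_{S^c}$ coincide, so $\rank(\rho_S)=\rank(\rho_{S^c})\leq \dloc^{|S^c|}=\dloc^{k}<\dloc^{k+1}=\dloc^{|S|}$, and $\rho_S$ can never be proportional to $\1$ on a $\dloc^{|S|}$-dimensional space. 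This is exactly why AME states cap out at $k=\floor{n/2}$. The computation in Appendix~\ref{app:inequivalence-SLOCC} that you propose to cite does not cover your case either: it explicitly requires $|S|=k+\kI\leq n/2$, which fails for $k=(n-1)/2$ and $\kI=1$. Worse, the rank invariant carries no information at all here: since both $\ket{\phi_{n,0}}$ and $\ket{\phi_{n,2}}$ are AME, every reduction of size at most $k$ is maximally mixed for both, and every reduction of size greater than $k$ has rank equal to that of its (maximally mixed) complement, so the two Schmidt-rank vectors agree entrywise on \emph{every} subset. The paper states this obstruction explicitly at the start of Appendix~\ref{app:inequivalence-SLOCC-AME} and for that reason abandons the rank method entirely for the AME case.

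The paper's actual proof of Proposition~\ref{TheoremAppF} proceeds quite differently: it writes out the reductions of both states to the first $k+1$ qudits, shows they are LU-equivalent via explicit local unitaries $\1^{\otimes(k-1)}\otimes W\otimes V$ (Lemma~\ref{lemmaXX}), then assumes a hypothetical LU-equivalence of the full states, composes the two to get a local symmetry of $\rho_S(\phi_{n,0})$, invokes Proposition~2 of Ref.~\cite{BurchardtRaissi20} to force the local factors to be monomial matrices, and finally derives a contradiction by comparing the support of $\ket{\phi_{n,2}}$ computed two ways ($\geq \dloc^{k+2}$ versus $=\dloc^{k+1}$). If you want to salvage your write-up, you would need to replace the entire rank argument with an invariant that actually distinguishes AME states in the same Schmidt-rank class; the support-plus-monomial-matrix machinery is the ingredient your proposal is missing.
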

We prove this proposition in Appendix \ref{app:inequivalence-SLOCC-AME}. 
We have not been able to prove the above proposition for all AME states constructed at different levels of the hierarchy (see also Corollary 1. in \cite{BurchardtRaissi20}).
We leave this question open for now. 

\section{Conclusion}
In this work, we presented a general method for constructing highly entangled $k \uni$ and AME states using the stabilizer formalism in conjunction with classical error correcting codes. This method significantly expands the set of known $k \uni$ and AME states. We further showed that a special subset of this new class of entangled states can be obtained from an iterative procedure that produces a hierarchy of $k \uni$ graph states. For the first two levels of the hierarchy, we showed that while the states share the same number of qudits and the same $k \uni$ property, they cannot be converted into each other by means of SLOCC operations. 
 We also presented new
sets of AME states that belong to different SLOCC and, equivalently, LU classes. These results constitute a general approach to constructing new examples of $k \uni$ and AME states that admit graphical representations, providing further instances of these important resource states and an opportunity to shed more light on the entanglement structure of these highly entangled multipartite states.
Although we have only proven that certain conditions on the underlying classical codes are sufficient for $k$-uniformity, we anticipate that future work will demonstrate rigorously that these conditions are also necessary.

\acknowledgements

We would thank, Antonio Ac\'{i}n, Jens Eisert, Mario Flory, Markus Grassl, Barbara Kraus, and Karol \. Zyczkowski for discussions and useful comments.
This research is supported by the National Science Foundation (grant nos. 1741656 and 2137953), FIS2020-TRANQI and Severo Ochoa CEX2019-000910-S), Fundaci\'o Cellex, Fundaci\'o Mir-Puig, Generalitat de Catalunya (CERCA Program) and AdG CERQUTE,
and the National Science Center in Poland under the Maestro grant number DEC-2015/18/A/ST2/00274, AB acknowledges the support of an NWO Vidi grant (Project No. VI.Vidi.192.109). 

\begin{widetext}

\begin{table}[t]
\begin{center}
\begin{tabular}{ |c|c|c|c|}  
 \hline
 &$k \uni$ state & Adjacency matrix & Graph state   \\
 \hline \hline
     (a) & $\ket{\phi_{n,0}}$ &  
   $\Gamma^{(n,0)}$ = $\left[  \begin{array}{c|c}
    0 & -A  \\
    \hline
    -A^T & 0\\
 \end{array} \right] $
      &  
      \parbox[c]{20em}{\includegraphics[width=3.4in]{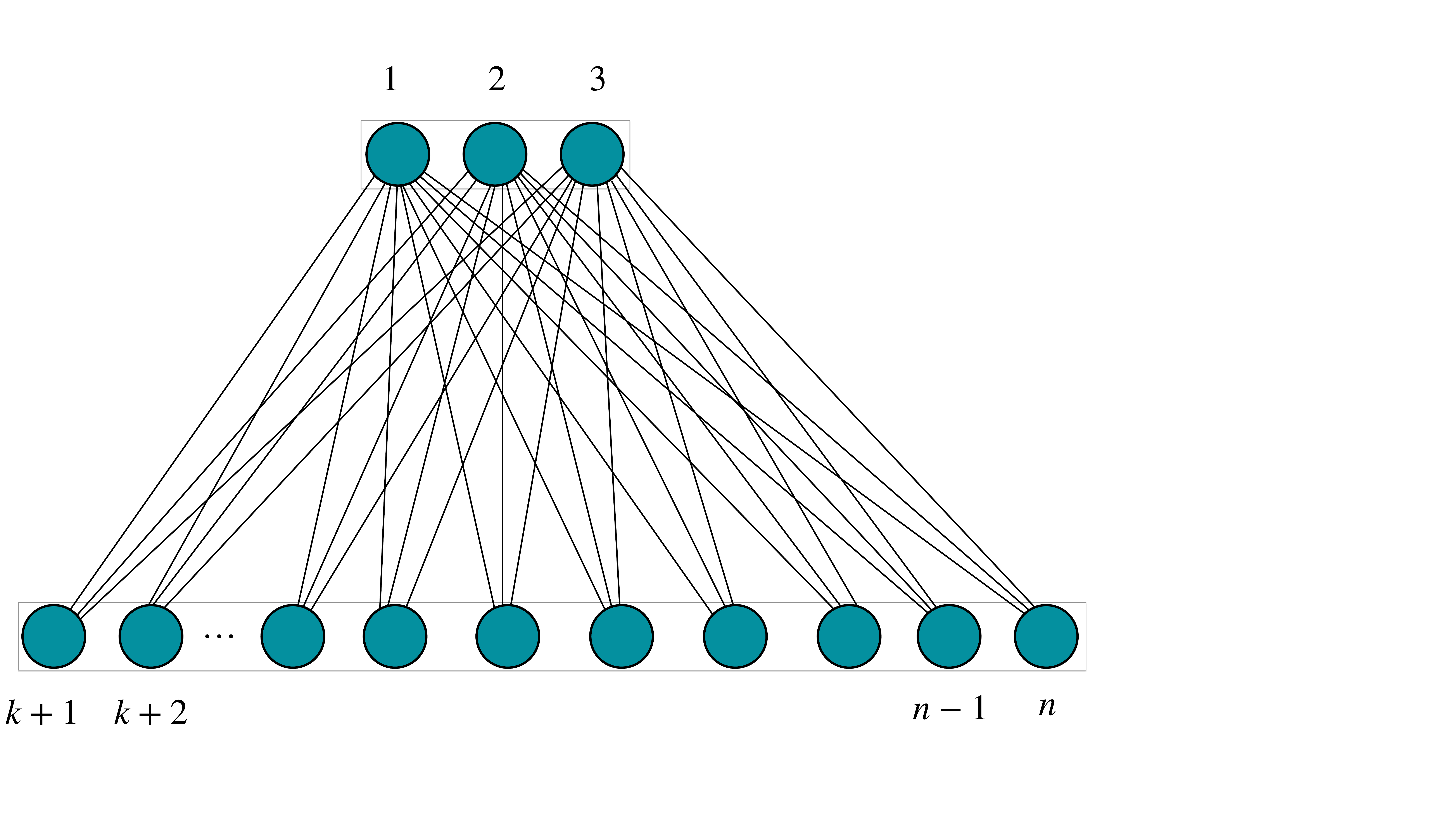}}
         \\ 
 \hline
     (b) &$\ket{\phi_{n,\nI}}$ & 
        $\Gamma^{(n,\nI)}$ =  
    $\left[  \begin{array}{c|c}
    0 & -A^{\phantom{T}}  \\
    \hline
    -A^T & 
    \color{blue}
     \begin{array}{c|c}    
     0 & 0\\
      \hline
     0 & 
     \begin{array}{c|c}    
     0 & -{\AI}\\
      \hline
     -{\AI}^T & 0
      \end{array}   
      \end{array}
    \\
 \end{array} \right]$
       & 
 \parbox[c]{20em}{\includegraphics[width=3.9in]{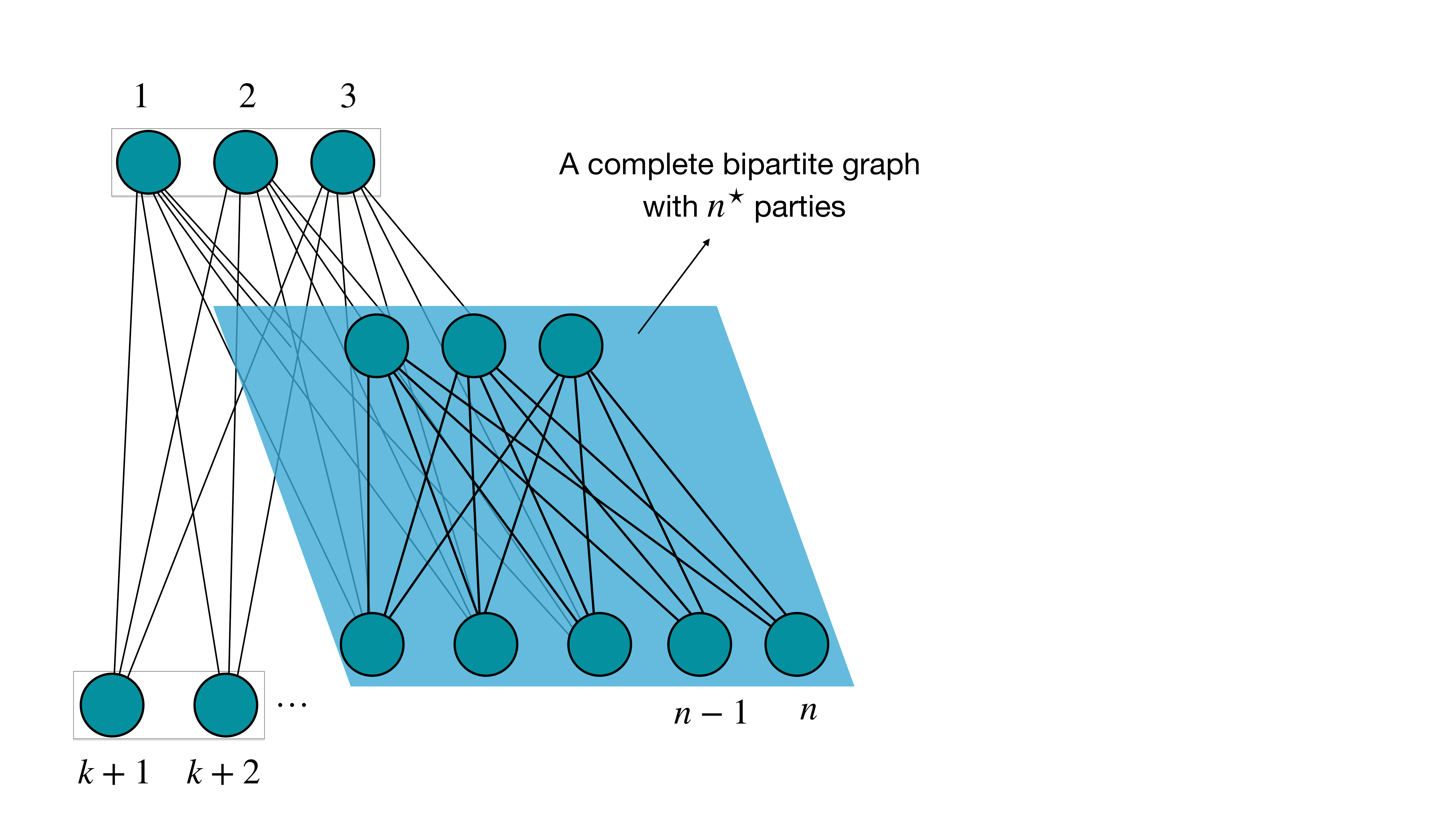}}
           \\ 
  \hline
    (c) & $\ket{\phi_{n,\nI,\nII}}$ & 
   $\Gamma^{(n,\nI,\nII)}$ 
   =  
    $\left[  \begin{array}{c|c}
    0 & -A^{\phantom{T}}  \\
    \hline
    -A^T & 
    \color{blue}
     \begin{array}{c|c}    
     0 & 0\\
      \hline
     0 & 
     \begin{array}{c|c}    
     0 & -{\AI}\\
      \hline
     -{\AI}^T & 
            \color{red}
         \begin{array}{c|c}    
     0 & 0\\
      \hline
     0 & 
     \begin{array}{c|c}    
     0 & -{\AII}\\
      \hline
     -{\AII}^T & 0
      \end{array}   
      \end{array}
      \end{array}   
      \end{array}
    \\
 \end{array} \right] $
     &   
    \parbox[c]{22em}{\includegraphics[width=4.0in]{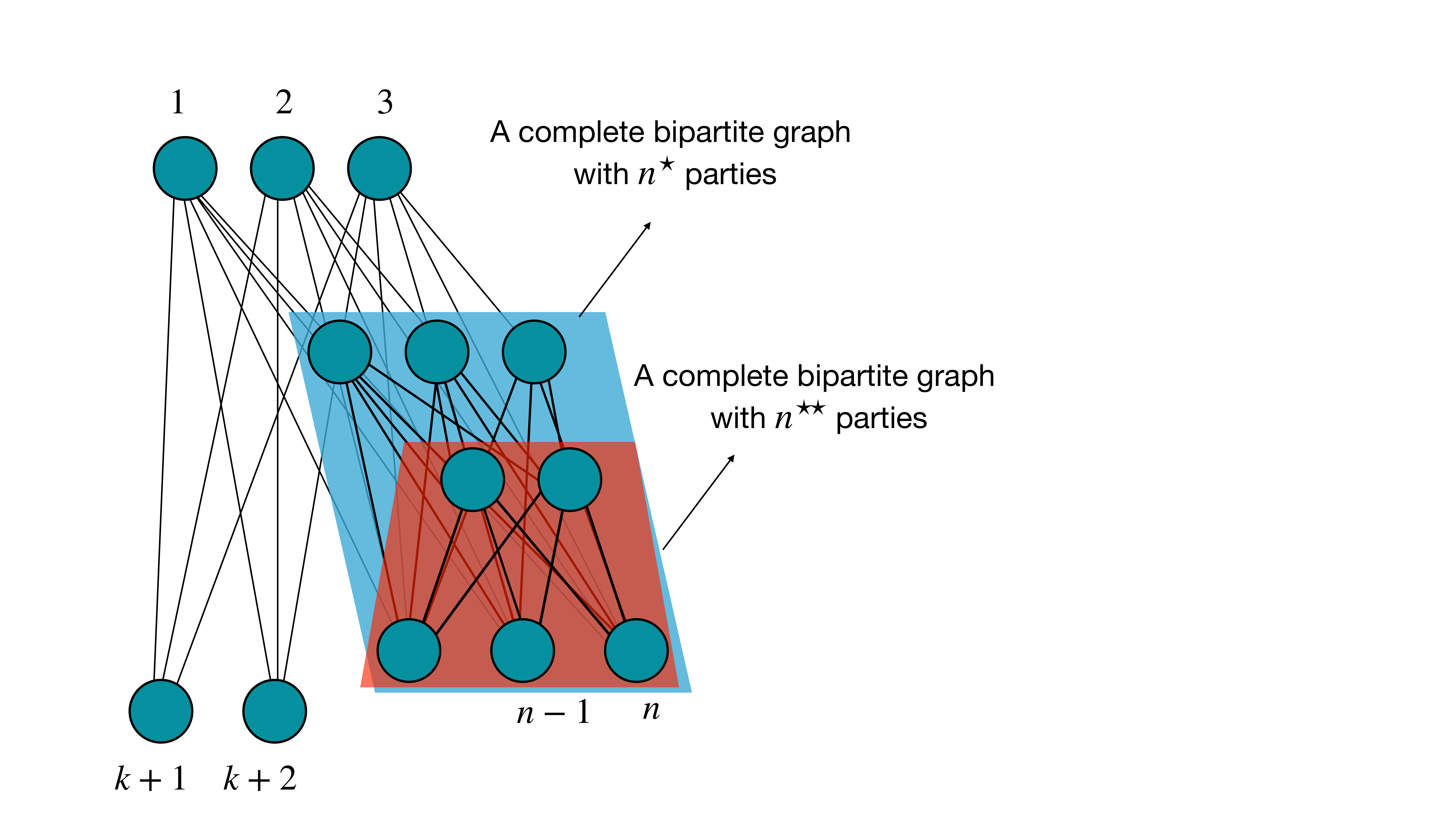}} 
      \\ 
 \hline
\end{tabular}
\end{center}
 \caption{\label{tabal-3-cases} 
 Different levels of the hierarchical graph construction that yields $k \uni$ graph states. (a) State, adjacency matrix, and graph obtained directly from a classical code. (b) The $k \uni$ graph state obtained at the first level of the hierarchy. (c) The $k \uni$ graph state obtained at the second level of the hierarchy. In the first and second levels of the hierarchy, the blue part of the graph contains $\nI$ qudits, and the red part contains $\nII$ qudits.}
\end{table}

\end{widetext}

\appendix
\begin{widetext}

\section{Explicit definition of classical linear codes}\label{app:classical linear codes}
In general, a classical linear error-correcting code is denoted by $\left[n ,k,\dH \right]_\dloc$, when it encodes $\dloc ^k$ messages into codewords living in a larger space of dimension $\dloc ^n$, all having Hamming distance at least $\dH$.
Linear codes are a special class of codes whose injective map from the set of messages to the set of codewords is linear and defined over a finite field $GF(\dloc)$ (for the motivation for using finite fields see \cite[Chapter~3]{MacWilliams}).
Codewords of a linear code are constructed by taking linear combinations of the rows of a matrix called the generator matrix $G_{k\times n}$.
For a given vector $\vec{x}_i$, a codeword can be written as $\vec{c}_i = \vec{x}_i \, G_{k\times n}$.
A generator matrix can always be written in the standard form  
 \begin{equation}\label{eq:generator}
   G_{k \times n} = [\1_k |A]\, , 
 \end{equation}
where $\1_k$ is a $k \times k$ identity matrix, and $A$ is a $k \times (n-k)$ matrix with elements in $GF(\dloc)$. 
Every linear code $C=[n,k,\dH]_\dloc$ has a dual code $C^\perp$ defined such that its codewords are orthogonal to all the codewords of the original code with respect to the standard Euclidean inner product of the finite field \cite[Chapter~1]{MacWilliams}.
The generator matrix of the dual code is the so-called parity check matrix $H$.
It satisfies $GH^T = 0$. 

A $k \uni$ state $\ket {\phi_{n,0}}$, Eq.~\eqref{eq:state-n,0}, can be constructed by taking a superposition of the computational basis states corresponding to all of the codewords of a linear code $C$ such that it and its dual $C^\perp$ have a minimum distance of at least $k+1$.
Using Eq.~\eqref{eq:state-n,0}, we have
 \begin{equation}\label{eq:kunimin-general}
   \ket{\phi_{n,0}}=
   \sum_i \ket{\vec{c}_i}=
   \sum_{i} \ket{\vec{x}_i \, G_{k  \times n}}
   =\sum_{i} \ket{\vec{x}_i,\, \vec{x}_i A}\, ,
 \end{equation}
Note that the linear code $C$ with a given matrix $A$ is an MDS (maximum distance separable) code if and only if every square submatrix of $A$ is nonsingular \cite[Chapter~11]{MacWilliams}, \cite{Singleton}.
The MDS codes are those linear codes that achieve maximum possible minimum Hamming distance \cite{Singleton} \cite[Chapter~11]{MacWilliams}:
 \begin{equation}\label{eqapp:Singleton}
  \dH \leq n -k+1\ .
 \end{equation}

Now let us discuss how to construct suitable $A$ matrices.
For this, we first need to introduce the concept of \emph{Singleton arrays} \cite{Singleton,Roth}\cite[chapter~11]{MacWilliams}.
Any finite field $GF(q)$, with $q$ a power of a prime number, contains at least one primitive element \cite[chapter~4]{MacWilliams}.
An element $\gamma \in GF(q)$ is called primitive if all the nonzero elements of $GF(q)$ can be written as some integer power of $\gamma$.
Given any such primitive element $\gamma$, the Singleton array of size $q$ is defined as
\begin{equation}
S_q \coloneqq \begin{array}{ccccccc}
1 & 1 & 1 & \ldots & 1 & 1 & 1\\
1 & a_1 & a_2 & \ldots & a_{q-3} & a_{q-2} &  \\
1 & a_2 & a_3  & \ldots & a_{q-2} &   &  \\
\vdots & \vdots & \vdots & \iddots &  &   &  \\
1 & a_{q-3} & a_{q-2} &   &   &   &    \\
1 & a_{q-2} &  &   &   &   &   \\
1 &   &  &   &   &   &    \\ 
 \end{array}, \label{sq}
\end{equation}
with
\begin{equation}
  a_i \coloneqq \frac{1}{1-\gamma^i} .
\end{equation}
It follows that by taking rectangular submatrices of $S_q$, it is hence possible to construct proper $A$ matrices \cite[chapter~11]{MacWilliams}.
All one has to do is to take a power of a prime $q$ sufficiently large such that $S_q$ contains a submatrix of the required size, and then take this as the matrix $A$ in Theorem 1.

As an example, let us consider the case $q=5$.
Taking $\gamma=3$, which is a primitive element in $GF(5)=\{0,1,2,3,4\} \mod(5)$, we find
\begin{align}
a_1 &= \frac{1}{1-3}= \frac{1}{3} = 2 \\
a_2 &= \frac{1}{1-9}=\frac{1}{2}=3 \\
a_3 &= \frac{1}{1-27}=\frac{1}{4}=4 \ ,
\end{align}
and obtain
\begin{equation}
S_5 = \begin{array}{ccccc}1 & 1 & 1 & 1 &1 \\ 1 & 2 & 3 & 4 & \\ 1 & 3 & 4 &   & \\ 1 & 4 &  &  & \\ 1 &   &   &   &  \end{array}.
\end{equation}
The biggest submatrix has size $3\times 3$.
Hence, taking
\begin{equation}
A_{3\times 3}=\left[\begin{array}{ccc}
1 & 1 & 1 \\
1 & 2 & 3 \\
1 & 3 & 4  \\
 \end{array} \right] \ ,
\end{equation}
we can construct a matrix $A$, and the resulting AME state can precisely be constructed using the method discussed in Theorem 1.
In the Appendix of Ref.~\cite{Zahra-min-support}, more details on the explicit construction of Singleton arrays are available. We also note that the main conjuncture for the MDS codes emphasise that this is the method of constructing $A$ matrices  for the following interval \cite{Hirschfeld-ConjectureMDS,Singleton,Roth}: 
\begin{equation}
n=
  \begin{cases}
    \dloc+2 &\text{for}\ k=3 \text{and} \ k=\dloc-1  \ \text{both with $\dloc$ even}\\
    \dloc+1  &\text{in all other cases}
  \end{cases} .
\end{equation}

\section{Stabilizer formalism}\label{app:stabilizer}
The stabilizer formalism is a useful tool in different branches of quantum information science like quantum error correcting codes \cite{Gottesman-thesis,Zahra-min-support,ZahraQECC}, one-time or cluster states \cite{Raussendorf-Briegel}, and graph states \cite{Hein}. 
We first recall the definition of the generalized Pauli operators acting on a $\dloc$-dimensional Hilbert space:
\begin{align} \nonumber
  X\ket{j}&=\ket{j+1 \mod \dloc}, \\
  Z\ket{j}&=\omega^j\,\ket{j}\, , \nonumber
\end{align}
where $\omega \coloneqq \e^{\iu 2 \pi/\dloc}$ is the $\dloc$-th root of unity. 
$X$ and $Z$ are unitary, traceless, and they satisfy the conditions $X^\dloc = Z^\dloc = \1$ and $ZX=\omega XZ$.
For a collection of $n$ qudits, we shall use
subscripts to identify the corresponding Pauli operators. 
For example, $Z_i$ and $X_i$ operate on the space of $i$-th qudit. 
Operators of the form
\begin{equation}
\label{pauliProd}
\omega^\lambda X_1^{w_1} Z_1^{w'_1} 
\otimes \cdots \otimes
X_n^{w_n} Z_n^{w'_n} \ ,
\end{equation}
are called \textit{Pauli products}, where $\lambda, w_i$, and $w'_i \in \mathbb{Z}_\dloc$ for all $i \in \{1, \dots n \}$. For a given number of qudits $n$, the collection
of all possible Pauli products (\ref{pauliProd}) form a group called the Pauli group $P_n$. 

Now we are ready to define the stabilizer formalism.
For a given state $\ket \psi $ of $n$ qudits, the element $S \in P_n$ is called a \textit{stabilizer} operator of a state if it leaves the state invariant, i.e. $S \ket \psi =\ket \psi$. 
The set of all stabilizer operators of a state $\ket \psi $ is denoted by $\mathcal{S}_{\ket \psi}$.  
If $| \mathcal{S}_{\ket \psi} |=\dloc^n$, the state $\ket \psi $ is called a \textit{stabilizer state}, and its density matrix has the following representation: \cite{Stabilizers} 
\begin{equation}
\label{formOfStab}
\ketbra {\psi}{\psi} = \dfrac{1}{\dloc^n}
\sum_{S \in \mathcal{S}_{\ket \psi}} S \ .
\end{equation}
Each stabilizer state has exactly $n$ independent and commuting stabilizers, called \emph{stabilizer generators}, $S_i \in \mathcal{S}_{\ket \psi}$, such that any operator $S \in \mathcal{S}_{\ket \psi}$ is of the form
\begin{equation}
S = S_1^{w_1}\cdots S_n^{w_n}\, ,
\end{equation}
for some choice of $w_i \in \mathbb{Z}_\dloc$, $i \in \{1,\dots n\}$.

For a stabilizer state, there is a straightforward method to calculate the form of a reduced density matrix $\rho_C$. 
Consider any subsystem $C \subset \{1, \dots, n\}$ of $n$ qudits and an element of the Pauli group $S\in P_n$. By $S_{|C} $ we denote the restriction of the operator $S$ to the subsystem $C$. 
For example, the operator $S=X_1\otimes X_2Z_2^2$ restricted to the subsystem $\{2\}\subset\{1,2\}$ has the form $S_{|\{2\}}= X_2Z_2^2$. Using Eq.~\eqref{formOfStab} and the fact that Pauli operators $X_i^{x_i} Z_i^{z_i}$ are traceless except when $x_i=z_i=0$, one can show the following:

\begin{proposition}\label{propStab} 
Consider a stabilizer state $\ket \psi$ of $n$ qudits with local dimension $\dloc$ and the stabilizer operators $\mathcal{S_{\ket \psi}} \subset P_n$. Consider any subsystem $C \subset \{1,\dots , n\}$ of the $n$ qudits and its complementary subsystem $C^c= [n] / C$, where $[n]=\{1,\dots , n\}$. 
The reduced density matrix $\rho_C$ has the following form:
\[
\rho_C = \Tr_{C^c} \ketbra{\psi}{\psi} \propto \, \Tr_{C^c} \sum_{\substack{ S\in \mathcal{S}_{\ket{\psi}} \\ S_{|C^c} = \1^{\otimes |C^c|}}} S \ .
\]
In particular, $\rho_C$ is maximally mixed, i.e. $\rho_C\propto \1$, if and only if for any $S \in \mathcal{S}_{\ket{\psi}}$,  $S_{|C^c} = \1^{\otimes |C^c|}$ implies $S= \1^{\otimes n}$.
\end{proposition}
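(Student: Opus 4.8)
To prove Proposition~\ref{propStab}, the plan is to expand the state in its stabilizer form and compute the partial trace term by term. Starting from Eq.~\eqref{formOfStab}, $\ketbra\psi\psi = \dloc^{-n}\sum_{S\in\mathcal{S}_{\ket\psi}}S$, I would use the fact that every element of the Pauli group factorizes as $S = S_{|C}\otimes S_{|C^c}$ over the two subsystems (absorbing any global phase into the $C$-factor), so that by linearity $\Tr_{C^c}\ketbra\psi\psi = \dloc^{-n}\sum_{S}\Tr(S_{|C^c})\,S_{|C}$. The next ingredient is the elementary single-qudit identity $\Tr(X^{w}Z^{w'}) = \dloc\,\delta_{w,0}\,\delta_{w',0}$, which shows that a Pauli product supported on $C^c$ is traceless unless it equals $\1^{\otimes|C^c|}$, in which case its trace is $\dloc^{|C^c|}$. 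Hence only the terms with $S_{|C^c} = \1^{\otimes|C^c|}$ contribute, and since $\Tr_{C^c}S = \dloc^{|C^c|}\,S_{|C}$ on precisely those terms, one obtains $\rho_C\propto\Tr_{C^c}\sum_{S:\,S_{|C^c}=\1^{\otimes|C^c|}}S$ with proportionality constant $\dloc^{-n}$; equivalently $\rho_C = \dloc^{-|C|}\sum_{S:\,S_{|C^c}=\1^{\otimes|C^c|}}S_{|C}$.

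For the second assertion I would set $H\coloneqq\{S\in\mathcal{S}_{\ket\psi}: S_{|C^c}=\1^{\otimes|C^c|}\}$, which is a subgroup of $\mathcal{S}_{\ket\psi}$, and record two structural facts. First, the restriction map $S\mapsto S_{|C}$ is injective on $H$, since two elements of $H$ that agree on $C$ also agree on $C^c$ and hence everywhere. Second, if $S\in H$ and $S\neq\1^{\otimes n}$, then $S_{|C}$ is not a scalar multiple of the identity: otherwise $S = S_{|C}\otimes\1^{\otimes|C^c|}$ would be a scalar operator, and a scalar stabilizer of a nonzero state must equal $\1^{\otimes n}$. Consequently the operators $\{S_{|C}: S\in H\}$ are distinct, pairwise non-proportional Pauli products, hence pairwise orthogonal (and in particular linearly independent) with respect to the Hilbert--Schmidt inner product, and $\rho_C = \dloc^{-|C|}\sum_{S\in H}S_{|C}$ is the expansion of $\rho_C$ in this orthogonal family, containing exactly $|H|$ nonzero terms, one of which (the $S=\1^{\otimes n}$ term) is the identity.

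The equivalence then follows: if $H=\{\1^{\otimes n}\}$ the expansion collapses to $\rho_C=\dloc^{-|C|}\1$, so $\rho_C$ is maximally mixed; conversely, if $\rho_C\propto\1$ then, having unit trace, $\rho_C=\dloc^{-|C|}\1$, and comparing the two expansions of $\rho_C$ in the orthogonal family above (for instance by applying $\Tr(S_{|C}^{\dagger}\,\cdot\,)$ to both sides for a putative non-identity term) forces every non-identity term to be absent, i.e. $H=\{\1^{\otimes n}\}$.

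The calculation is essentially routine; the only place that needs genuine care is the phase bookkeeping, specifically the step that a stabilizer operator restricting to the identity on $C^c$ and to a scalar on $C$ must be the global identity --- this is exactly where the defining property $S\ket\psi=\ket\psi$ is used rather than merely $S\in P_n$ --- together with keeping the normalization factors ($\dloc^{|C^c|}$ from the trace over $C^c$ versus the overall $\dloc^{-n}$) straight, although the latter does not affect the stated proportionality.
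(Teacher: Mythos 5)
Your proof is correct and follows essentially the same route the paper indicates: expand $\ketbra{\psi}{\psi}$ via Eq.~\eqref{formOfStab} and use the tracelessness of non-identity Pauli products to kill all terms with $S_{|C^c}\neq\1^{\otimes|C^c|}$. You additionally spell out the ``only if'' direction (injectivity of restriction on the subgroup $H$, the fact that a scalar stabilizer must be the identity, and Hilbert--Schmidt orthogonality of the surviving restrictions), details the paper leaves implicit; these are handled correctly.
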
 

\noindent

Observe that the following corollary is an immediate consequence of the above proposition, because the only stabilizer operators $S\in \mathcal{S}_{\ket{\psi}}$ that contribute after the partial trace are those that have identity operators acting on all qudits belonging to the complementary set $C^c$ (see also \cite{EntMIxedstates}).

  \begin{corollary}\label{prop:number-of-1}
A pure stabilizer state of $n$ qudits is a $k \uni$ state if and only if in its stabilizer operators the identity matrix appears at most on $n-k-1$ different qudits.
We note that there is one stabilizer operator that is formed by tensoring $n$ identity matrices.
  \end{corollary}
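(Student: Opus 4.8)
The statement to prove is Corollary~\ref{prop:number-of-1}, which characterizes $k$-uniformity of a pure stabilizer state in terms of the identity-support of its stabilizer operators. This is indeed an immediate consequence of Proposition~\ref{propStab}, and my plan is to simply unpack that reduction carefully.

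\textbf{Plan.} The plan is to invoke Proposition~\ref{propStab} for every subsystem $C$ of size at most $k$ and translate the condition ``$S_{|C^c} = \1^{\otimes |C^c|}$ implies $S = \1^{\otimes n}$'' into a statement about where identities can appear in the nontrivial stabilizer operators. First I would recall that, by the definition of $k$-uniformity, $\ket\psi$ is $k$-UNI if and only if $\rho_C \propto \1$ for all $C$ with $|C| \le k$, equivalently for all $C^c$ with $|C^c| \ge n-k$. Then I would apply Proposition~\ref{propStab}: $\rho_C$ is maximally mixed precisely when no nontrivial $S \in \mathcal{S}_{\ket\psi}$ restricts to the identity on all of $C^c$. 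So $\ket\psi$ fails to be $k$-UNI iff there exist some $C^c$ with $|C^c| \ge n-k$ and some nontrivial $S$ with $S_{|C^c} = \1$; since enlarging $C^c$ only makes the constraint harder, this is equivalent to: there exists a nontrivial $S$ whose identity-support (the set of qudits on which $S$ acts as $\1$) has size at least $n-k$. Negating, $\ket\psi$ is $k$-UNI iff every nontrivial stabilizer operator has identity-support of size at most $n-k-1$, which is exactly the claim.

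\textbf{Key steps, in order.} (1) State the definition of $k$-UNI in terms of reductions $\rho_C$ for $|C|\le k$ and rewrite it as a condition on complements $C^c$, $|C^c|\ge n-k$. (2) Quote the ``if and only if'' clause of Proposition~\ref{propStab}. (3) Observe the monotonicity: if a nontrivial $S$ has identity-support containing a set of size exactly $n-k$, one may take $C^c$ to be any $(n-k)$-subset of that support (and $C$ its complement), so the existence of a ``bad'' $S$ for \emph{some} admissible $C^c$ is equivalent to the existence of a nontrivial $S$ with at least $n-k$ identities. (4) Negate to obtain the stated characterization, and note the trivial stabilizer $\1^{\otimes n}$ is always present and excluded.

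\textbf{Main obstacle.} There is essentially no technical obstacle here — the content is entirely in Proposition~\ref{propStab} — so the only thing to be careful about is the quantifier bookkeeping in step (3): one must make sure that ``maximally mixed for \emph{all} $C$ of size $\le k$'' lines up correctly with ``\emph{every} nontrivial stabilizer has $\le n-k-1$ identities'', and in particular that it suffices to check $|C| = k$ (equivalently $|C^c| = n-k$) rather than all smaller sets, which follows because a smaller $C^c$ imposes a weaker constraint. I would also explicitly flag the exception for the all-identity operator so the reader sees why it does not spoil the criterion.
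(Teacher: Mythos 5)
Your proposal is correct and follows the same route as the paper, which likewise presents the corollary as an immediate consequence of Proposition~\ref{propStab} (the only stabilizer operators surviving the partial trace over $C^c$ are those acting as identity on all of $C^c$). Your version merely spells out the quantifier bookkeeping—reducing to $|C^c| = n-k$ and excluding $\1^{\otimes n}$—more explicitly than the paper does.
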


\section{Stabilizer formalism of the $k \uni$ states}\label{app:stabilizers-of-graph} 
\subsection{The states obtained from classical codes} \label{app:stabilizers-of-graph-codes}
Now we discuss the stabilizer formalism of the $k \uni$ states $\ket{\phi_{n,0}}$  given in Eq.~\eqref{eq:state-n,0}.
Remember that, given a classical linear code with codewords $\vec{c}_i$ for $i \in \{1, \dots , n\}$, it is possible to construct a $k \uni$ state $\ket{\phi_{n,0}}$. 
The codewords can be obtained from $\vec{c}_i = \vec{x}_i \, G_{k \times n}$ where $G_{k \times n} = [\1_k |A]$ (see Appendix~\ref{app:classical linear codes}).
Denoting the matrix elements of $A$ by $a_{i,j}$, it is straightforward to see that for $q$ prime, the stabilizer generators of state $\ket{\phi_{n,0}}$ have the following form (see also \cite{Zahra-min-support} and \cite{Zahra-non-min-support}):
 \begin{equation} 
  S^{\ket{\phi_{n,0}}}  \coloneqq  \\
   \begin{cases}
    \bigotimes_{j=1}^{i-1} \1 \otimes X  \bigotimes_{j=i+1}^{k}  \1 \bigotimes_{j=1}^{n-k} X^{a_{i,j}} & \qquad 1\leq i \leq k \\
    \phantom{a} \\
    \bigotimes_{j=1}^{k} Z^{-a_{j,i}} \bigotimes_{j=1}^{i-1} \1 \otimes Z  \bigotimes_{j=i+1}^{n}  \1 & \qquad 1 < i \leq n-k
  \end{cases} .
 \end{equation} 
If one performs a local Fourier transform on all the last $n-k$ qudits, mapping $Z$ operators into $X$ operators and vice versa, the stabilizer generators become
 \begin{equation}  
  S^{\ket{\phi_{n,0}}}  \coloneqq  \\
   \begin{cases}
    \bigotimes_{j=1}^{i-1} \1 \otimes X  \bigotimes_{j=i+1}^{k}  \1 \bigotimes_{j=1}^{n-k} Z^{-a_{i,j}} & \qquad 1 \leq i \leq k \\
    \phantom{a} \\
    \bigotimes_{j=1}^{k} Z^{-a_{j,i}} \bigotimes_{j=1}^{i-1} \1 \otimes X \bigotimes_{j=i+1}^{n}  \1 & \qquad 1 < i \leq n-k
  \end{cases} .
 \end{equation}
In this case, the resulting state is a graph state corresponding to a \emph{complete bipartite graph} with the following set of stabilizer generators:
  \begin{equation}
  \label{egT}
    S^{\ket{\phi_{n,0}}}_i  = X_i \prod_j (Z_j)^{\Gamma^{(n,0)}_{i,j}} , \qquad 1 \leq i \leq n \, ,
 \end{equation}
where (see Eq.~\eqref{eq:adjacency-Gamma-n,0})
 \begin{equation}
   \label{egT2}
   \Gamma^{(n,0)} = \left[  \begin{array}{c|c}
    0 & -A^{\phantom{T}}  \\
    \hline
    -A^T & 0\\
 \end{array} \right] \, .
 \end{equation}
 
One method to check for $k$-uniformity is based on the stabilizer formalism. For this, as we discussed at the end of section~\ref{sec:kuni-from-codes}, one can check if in all the stabilizer generators and arbitrary products of them, identity operators appear on at most $n -k - 1$ different qudits.
In order to check this, we note that every row of the adjacency matrix $\Gamma^{(n,0)}$ corresponds to one of the stabilizer generators. 
It is easy to check that identity operators appear on at most $n-k-1$ different qudits for each of the stabilizer generators. 
Moreover, as every square submatrix of the $A$ matrix is nonsingular, it follows that any subset of up to $k$ column vectors of $\Gamma^{(n,0)}$ is linearly independent (for a proof see \cite[Chapter~1]{MacWilliams} \cite{Zahra-min-support}).
Due to this linear independence, we can conclude that multiplying stabilizer generators does increase the number of identity operators.

\subsection{The states constructed from the general method} \label{app:stabilizers-of-graph-general}
Rather than simply taking the equally weighted superposition of the codewords, we showed in the main text how to generalise the method and construct $k \uni$ states $\ket{\phi_{n,\nI}}$ by applying operators $\Op_{\nI}$.
This approach provides explicit closed form expressions for the states.
In this appendix, we show that the state $\ket{\phi_{n,\nI}}$, Eq.~\eqref{eq:state-n-nI}, corresponds to the first level of the hierarchical construction presented in part (b) of Table~\ref{tabal-3-cases}.
To do this, we first recall Eq.~\eqref{eq:state-n-nI}:
     \begin{equation}
     \begin{split}
      \ket{\phi_{n,\nI}} &= \1^{(n-\nI)} \otimes \Op_{\nI} \  \ket{\phi_{n,0}} \\
    &= \sum_{i} \ket{c^{(i)}_1, \dots , c^{(i)}_{n-\nI}} 
    \otimes Z^{-c^{(i)}_{n-(\nI-1)}} \otimes \dots \otimes Z^{-c^{(i)}_{n-(\nI-\kI)} }\otimes X^{c^{(i)}_{n-(\nI-\kI)+1}}  \otimes \dots \otimes X^{c^{(i)}_n} \ \ket{\phi_{\nI,0}} \, ,
     \end{split}
     \end{equation}
where $\ket{\phi_{\nI,0}}$ is a $\kI \uni$ state, and in this case, $\vec{c}_j = (\vec{x} \ G_{\kI,\nI})=(\vec{x}_j ,\, \vec{x}_j \, {\AI})$, and $\vec{x} $ is a vector of size $\kI$.
A given state $\ket{\phi_{n,\nI}}$ constructed using this method is the common eigenstate with eigenvalue $+1$ with respect to each of the following stabilizer generators $ S^{\ket{\phi_{n,\nI}}}_i$:
 \begin{equation} 
   \begin{cases}
    \bigotimes_{j=1}^{i-1} \1 \otimes X  \bigotimes_{j=i+1}^{k}  \1 \bigotimes_{j=1}^{n-k-\nI} X^{a_{i,j}} \bigotimes_{j=n-k-\nI+1}^{n-k-\nI+\kI} Z^{-a_{i,j}} \bigotimes_{j=n-k-\nI+\kI+1}^{n} X^{a_{i,j}}  
     & 1 \leq i \leq k \\
    \phantom{.} \\
    \bigotimes_{j=1}^{k} Z^{-a_{j,(i-k)}} \bigotimes_{j=k+1}^{i-1} \1 \otimes Z \bigotimes_{j=i+1}^{n}  \1 &  k+1 < i \leq n-\nI \\
    \phantom{.} \\
    \bigotimes_{j=1}^{k} Z^{-a_{j,(i-k)}} \bigotimes_{j=k+1}^{i-1} \1 \otimes X \bigotimes_{j=i+1}^{n-\nI+\kI} \1 \bigotimes_{j=1}^{\nI-\kI} X^{\alpha_{(i-n+\nI),j}} & n-\nI+1 \leq i \leq n-\nI+\kI \\
    \phantom{.} \\
    \bigotimes_{j=1}^k Z^{-a_{j,(i-k)}} \bigotimes_{j=k+1}^{n-\nI}\1 \bigotimes_{j=1}^{\kI} Z^{-\alpha_{j,(i-n+\nI-\kI)}} \bigotimes_{j=n-\nI+\kI+1}^{i-1} \1 \otimes Z \bigotimes_{j=i+1}^n\1  & n-\nI+\kI+1 \leq i \leq n 
     \end{cases} ,
 \end{equation} 
where matrix elements of ${\AI}$ are denoted by $\alpha_{i,j}$.
By performing local Fourier transforms $F$ on the last $n-k$ qudits, except the qudits in the interval $(n-\nI, \dots , n-\nI+\kI+1)$, the stabilizer generators can be converted into graph state form:
  \begin{equation}
  \label{kUniStab}
    S^{\ket{\phi_{n,\nI}}}_i  = X_i \prod_j (Z_j)^{\Gamma^{(n,\nI)}_{i,j}} , \qquad 1 \leq i \leq n \, ,
 \end{equation}
where 
 \begin{equation}
   \Gamma^{(n,\nI)} = 
    \left[  \begin{array}{c|c}
    0 & -A^{\phantom{T}}  \\
    \hline
    -A^T & 
    \color{blue}
     \begin{array}{c|c}    
     0 & -{\AI}\\
      \hline
     -{\AI}^T & 0
      \end{array}   
    \\
 \end{array} \right] \, .
 \end{equation}
We see that the graph representation of the state $\ket{\phi_{n,\nI}}$ is the same as what is shown in Table~\ref{tabal-3-cases}, part (b). 

Note that the stabilizer generators (\ref{kUniStab}) are all linearly independent; hence in accordance with the discussion in \cref{app:stabilizer} about expressing the density matrices of stabilizer states in terms of stabilizer operators, we conclude that $\ketbra{\phi_{n,0}}{\phi_{n,0}}$ is locally equivalent to
\begin{equation}
\label{Bip}
\rho
=\dfrac{1}{n^\dloc}
\sum_{(w_i,\ldots,w_n ) \in \mathbb{Z}_\dloc^n } 
\prod_{i=1}^n \Big(X_i \prod_j (Z_j)^{\Gamma^{(n,0)}_{i,j}} \Big)^{w_i}
.
\end{equation}
Similarly, $\ketbra{\phi_{n,\nI}}{\phi_{n,\nI}}$ is locally equivalent to
\begin{equation}
\label{nonBip}
\sigma
=\dfrac{1}{n^\dloc}
\sum_{(w_i,\ldots,w_n ) \in \mathbb{Z}_\dloc^n } 
\prod_{i=1}^n \Big(X_i \prod_j (Z_j)^{\Gamma^{(n,\nI)}_{i,j}} \Big)^{w_i}
.
\end{equation}
In the above equations, $\rho$ and $\sigma$ are local unitary equivalent to the density matrices $\ketbra{\phi_{n,0}}{\phi_{n,0}}$, and $\ketbra{\phi_{n,\nI}}{\phi_{n,\nI}}$ respectively, as we performed local Fourier gates to bring the states into graph form.

\section{Proof of Theorem~\ref{thm:general-adjacency}}\label{app:prooftheorems}
Here we give the proof of  Theorem~\ref{thm:general-adjacency}.
We first discuss the structure of the stabilizer generators of $k \uni$ states in more detail. Recall from \cref{app:stabilizers-of-graph-codes} that the stabilizer generators of state $\ket{\phi_{n,0}}$ are formed using codewords $\vec{c}_i$ and the matrix $A$, which has the property that every square submatrix is nonsingular.

\begin{lemma}
If every square submatrix of $A$ is nonsingular, any linear combination of $t$ rows has at most $t-1$ vanishing elements.
\end{lemma}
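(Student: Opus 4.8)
The plan is to derive a contradiction straight from the ``every square submatrix of $A$ is nonsingular'' hypothesis. Write $A\in GF(q)^{k\times(n-k)}$ with rows $\vec r_1,\dots,\vec r_k$, and read a \emph{linear combination of $t$ rows} as a vector $\vec v=\sum_{s=1}^{t}\lambda_s\,\vec r_{i_s}$ with $i_1<\dots<i_t$ distinct and all coefficients $\lambda_s\in GF(q)$ nonzero (if some $\lambda_s$ vanished the combination would really involve fewer than $t$ rows, so this is the intended meaning; note $t\le k$ since there are only $k$ rows, and if $t>n-k$ the statement is vacuous because $\vec v$ has length $n-k<t$, so we may assume $t\le n-k$).

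First I would suppose, for contradiction, that such a $\vec v$ has at least $t$ vanishing entries, located in columns $j_1<\dots<j_t$. Let $M$ be the $t\times t$ submatrix of $A$ cut out by rows $i_1,\dots,i_t$ and columns $j_1,\dots,j_t$. Restricting the relation $\vec v=\sum_{s}\lambda_s\vec r_{i_s}$ to the coordinates $j_1,\dots,j_t$ gives $\vec\lambda^{T}M=\vec 0^{T}$ with $\vec\lambda=(\lambda_1,\dots,\lambda_t)\neq\vec 0$, since the entries of $\vec v$ in those positions are all zero. Hence the rows of $M$ are linearly dependent, so $\det M=0$, contradicting the assumption that every square submatrix of $A$ is nonsingular. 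Therefore $\vec v$ has at most $t-1$ vanishing entries, which is the claim.

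I do not expect a real obstacle here: this is the familiar generic/MDS-matrix fact, and the proof is essentially a two-line application of the hypothesis. The only points that require care are fixing the meaning of ``linear combination of $t$ rows'' so that the all-zero combination is excluded (handled by the nonzero-coefficient convention) and checking that the $t\times t$ submatrix $M$ actually exists, which is exactly the bookkeeping $t\le k$ and $t\le n-k$ noted above. The lemma then feeds directly into the $k$-uniformity count behind Theorem~\ref{thm:general-adjacency}: a product of $t$ of the $X$-type stabilizer generators acts as a $t$-row linear combination of $A$ on the last $n-k$ qudits, hence carries at most $(k-t)+(t-1)=k-1\le n-k-1$ identity tensor factors.
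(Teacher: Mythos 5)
Your proof is correct and follows essentially the same route as the paper's: assume $t$ vanishing entries, restrict to the corresponding $t\times t$ submatrix, and conclude it is singular, contradicting the hypothesis. The extra bookkeeping you add (nonzero coefficients, $t\le k$ and $t\le n-k$) is a welcome clarification but does not change the argument.
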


\begin{proof}
Consider any linear combination of $t$ rows of matrix $A$ with the matrix elements $a_{i,j}$:
\begin{equation} 
\vec{v}:=
\Big(
\sum_{\ell=1}^t w_{i_\ell} a_{i_{1,1}}, \ldots , \sum_{\ell=1}^t w_{i_\ell} a_{i_{1,n}} \Big)\ ,
\end{equation}  
where $w_{i_1},\ldots,w_{i_t} \in GF(\dloc)$ are non-vanishing coefficients. 
Suppose that the corresponding vector $\vec{v}$ has at least $t$ vanishing elements indexed by $j_1,\ldots,j_t$. This is equivalent to the statement that matrix $A$ restricted to the square $t \times t$ matrix formed from the elements with row indices $i_1,\ldots,i_t$ and column indices $j_1,\ldots,j_t$ is singular. This holds regardless of the number $t$ and choice of indices $j_1,\ldots,j_t$ and $i_1,\ldots,i_t$. Hence, we conclude that for a matrix $A$ for which every square submatrix of $A$ is nonsingular, any linear combination of $t$ rows has at most $t-1$ vanishing elements.
\end{proof} 

Before we show that the state described in Theorem~\ref{thm:general-adjacency} is a $k \uni$ state, we first prove a Lemma about the stabilizer operators of the $k \uni$ state $\ket{\phi_{n,0}}$:

\begin{lemma}
\label{lemmaXXX}
Suppose that the Pauli strings $S_1, S_2 ,\dots , S_n$ (Eq.~\eqref{egT}) are stabilizer generators of the $k \uni$ state $\ket{\phi_{n,0}}$. Therefore for any nonzero multi-index $(w_1,\ldots,w_n ) \in \mathbb{Z}_\dloc^n $ the product,
\begin{equation}
\label{eq1} 
S^{\ket{\phi_{n,0}}}= S_1^{w_1}\cdots S_n^{w_n} =
\prod_{i=1}^n \Big(X_i \prod_j (Z_j)^{\Gamma^{(n,0)}_{i,j}} \Big)^{w_i} \ ,
\end{equation}
has identity operators acting on at most $n-k-1$ different qudits.
\end{lemma}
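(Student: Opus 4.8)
The statement concerns products of the graph-state stabilizer generators $S_i = X_i\prod_j (Z_j)^{\Gamma^{(n,0)}_{i,j}}$ for the complete bipartite adjacency matrix $\Gamma^{(n,0)} = \left[\begin{smallmatrix} 0 & -A \\ -A^T & 0\end{smallmatrix}\right]$. Since all $Z$-parts commute with all $X$-parts up to a phase, and since we only care about where identity operators appear (not the overall phase), the plan is to compute the $X$-support and $Z$-support of the product $S^{w_1}_1\cdots S^{w_n}_n$ separately, and show that the union of a qudit being trivial in \emph{both} the $X$- and $Z$-part happens on at most $n-k-1$ qudits. Write $\vec w = (\vec u, \vec v)$ where $\vec u\in\mathbb{Z}_q^k$ are the coefficients on the first block and $\vec v\in\mathbb{Z}_q^{n-k}$ on the second block. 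Then the $X$-part of the product has exponent $w_i$ on qudit $i$, so the $X$-support is exactly $\{\,i : w_i\neq 0\,\}$. The $Z$-part on qudit $j$ has exponent $-\sum_i w_i \Gamma^{(n,0)}_{i,j}$: for a qudit $j$ in the \emph{first} block this is $-\sum_{\ell} v_\ell A^T_{\ell, j} = -(\vec v A^T)_j$, i.e. a linear combination of rows of $A^T$ = columns of $A$; for a qudit $j$ in the \emph{second} block it is $-(\vec u A)_j$, a linear combination of rows of $A$.

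First I would dispose of the case $\vec u \ne 0$ and $\vec v\ne 0$ the easy way: then the $X$-support already contains all indices $i$ with $w_i\neq 0$, so the only qudits that could carry identity are those with $w_i = 0$; among the first block that's at most $k$ qudits, among the second at most $n-k$, but one checks the $Z$-part is nonzero on enough of them. More carefully, and more uniformly: a qudit $j$ carries the identity iff $w_j=0$ \emph{and} the $Z$-exponent on $j$ vanishes. The number of $j$'s in the second block with vanishing $Z$-exponent is the number of zero entries of the vector $\vec u A$; if $\vec u$ has $t := |\mathrm{supp}(\vec u)| \ge 1$ nonzero coordinates, the Lemma just proved says $\vec u A$ (a linear combination of $t$ rows of $A$) has at most $t-1$ vanishing entries, so at most $t-1$ qudits in the second block are $Z$-trivial; adding the $k-t$ qudits in the first block with $w_i = 0$ that could be trivial, the first block contributes at most $k - t$ and the second at most $t - 1$, for a total of at most $k-1 \le n-k-1$. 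Symmetrically, if $\vec v \ne 0$ with $s$ nonzero coordinates, the first block contributes at most $s-1$ trivial qudits and the second at most $(n-k)-s$, total at most $n-k-1$. Since $(w_1,\dots,w_n)$ is nonzero, at least one of $\vec u,\vec v$ is nonzero; taking the better of the two bounds when both are nonzero, and the sole available bound when only one is, gives $\le n-k-1$ in every case. (One must double-check that "trivial" requires \emph{both} $X$- and $Z$-exponent to vanish, so the counts above, which ignore the $X$-support, are upper bounds — which is all we need.)

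\textbf{The main obstacle.} The routine part is the bookkeeping; the one genuinely delicate point is handling the boundary between the two blocks correctly when both $\vec u$ and $\vec v$ are nonzero, to be sure the counts in the two blocks are not being double-counted or mis-attributed — in particular, that a qudit in the first block can only be killed by $\vec v$ (via $\vec v A^T$) and a qudit in the second block only by $\vec u$ (via $\vec u A$), because the diagonal blocks of $\Gamma^{(n,0)}$ are zero. This is exactly the feature that breaks in Theorem~\ref{thm:general-adjacency}, where the lower-right block $B$ couples second-block qudits to each other; so this proof is the clean $B=0$ prototype, and the argument should be written so that the role of the vanishing diagonal blocks is explicit, making the generalization to nonzero $B$ transparent. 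I would also remark that the phase $\omega^\lambda$ picked up from reordering $X$'s past $Z$'s is irrelevant here since Corollary~\ref{prop:number-of-1} only counts identity \emph{positions}, and that the excluded case — the product being the all-identity operator — is precisely $\vec w = 0$, consistent with the hypothesis "nonzero multi-index."
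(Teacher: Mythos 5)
Your proposal is correct and follows essentially the same route as the paper's proof: the same case split on whether the first-block or second-block part of the multi-index is nonzero, the same use of the preceding lemma (a combination of $t$ rows of $A$ has at most $t-1$ vanishing entries), and the same exploitation of the zero diagonal blocks of $\Gamma^{(n,0)}$; you count identity positions directly where the paper counts non-identity positions, but the bounds ($k-1$ and $n-k-1$ in the two cases) coincide exactly.
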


\begin{proof}
As we discussed in Appendix~\ref{app:stabilizers-of-graph-codes}, the generators of the stabilizer formalism of the state $\ket{\phi_{n,0}}$ are equivalent to
  \begin{equation}
    S_i= X_i \sum_j   (Z_j)^{\Gamma^{(n,0)}_{i,j}} \ ,
  \end{equation}
for $i = \{1, \dots , n\}$, and 
 \begin{equation}
   \Gamma^{(n,0)} = \left[  \begin{array}{c|c}
    0 & -A^{\phantom{T}}  \\
    \hline
    -A^T & 0\\
 \end{array} \right] \, .
 \end{equation}
All of the stabilizer operators can be expressed as $S^{\ket{\phi_{n,0}}} = S_1^{w_1}\cdots S_n^{w_n}$, where $w_1 ,\dots , w_n \in \mathbb{Z}_\dloc$.
Here we want to show that the number of the identity matrices in $S$ is at most $n-k-1$.

Consider any non-zero multi-index $(w_1,\ldots,w_n ) \in \mathbb{Z}_\dloc^n $. Since $(w_1,\ldots,w_n )\neq 0$, there are two possibilities, either $(w_1,\ldots,w_k )\neq 0$ or $(w_{k+1},\ldots,w_n )\neq 0$ (or both). Firstly, suppose that $(w_1,\ldots,w_k )\neq 0$, and denote by $t$ the number of non-vanishing elements in the tuple $(w_1,\ldots,w_k )$, for example, in $(\underbrace{1,\dots , 1}_{t}, \underbrace{0, \dots , 0}_{k-t})$ the first $t$ elements are non-vanishing.
In that case, the operators $X_i$ will appear on at least $t$ positions out of the first $k$ positions in 
\begin{equation}
\label{eq5}
\prod_{i=1}^k \Big(X_i \sum_j (Z_j)^{\Gamma^{(n,0)}_{i,j}} \Big)^{w_i}.
\end{equation}
Recall that the form of the matrix $\Gamma^{(n,0)}$ is related to the matrix $A$ by \cref{egT2}. Since the matrix $A$ is nonsingular, the operators $Z_i$ will appear on at least $n-k-t+1$ positions out of the last $n-k$ positions in \cref{eq5}. 
Note that in 
\begin{equation}
\label{eq6}
\prod_{i=k+1}^n \Big(X_i \sum_j (Z_j)^{\Gamma^{(n,0)}_{i,j}} \Big)^{w_i},
\end{equation}
there are no $X_i$ operators on the first $k$ positions and $Z_i$ operators on the last $n-k$ positions. 
Now we check the number of identity matrices in Eq.~\eqref{eq1}.
Obviously, (\ref{eq1})$=$(\ref{eq5})$\cdot$(\ref{eq6}), and hence the operator in \cref{eq1} has non-identity elements on at least $t$ positions out of the first $k$ positions, and on at least $n-k-t+1$ positions among the last $n-k$ positions. In total, the number of non-identity elements in \cref{eq1} is greater than or equal to $n-k+1$. Note that $n-k+1>k+1$.

Secondly, assume that $(w_{k+1},\ldots,w_n )\neq 0$ and denote by $t$ the number of non-vanishing elements in the tuple $(w_{k+1},\ldots,w_n )$. Similarly to the previous calculations, one may conclude that the number of non-identity elements in \cref{eq1} on the first $k$ positions is greater than or equal to $k-t+1$, while the number of non-identity elements in \cref{eq1} on the last $n-k$ positions is greater than or equal to $t$. In total, the number of non-identity elements in \cref{eq1} is greater than or equal to $k+1$, which proves the statement. 
\end{proof}

We are now ready to prove Theorem~\ref{thm:general-adjacency}. 
For this we use \cref{lemmaXXX} and show that the number of identity matrices in the stabilizer operators corresponding to the state $\ket{\phi}$ described in the theorem, i.e.,
 \begin{equation}
   S^{\ket{\phi}}= S_1^{w_1} \cdots S_n^{w_n} =
\prod_{i=1}^n \Big(X_i \sum_j (Z_j)^{\Gamma^{n}_{i,j}} \Big)^{w_i} \ ,
  \end{equation}
 with (recall Eq.~\eqref{eq:general-adjacency}) 
    \begin{equation}
       \Gamma^{n} = \left[  \begin{array}{c|c}
    0 & -A^{\phantom{T}}  \\
    \hline
    -A^T & \color{blue} B \\
 \end{array} \right] \ ,
    \end{equation}
is at most $n-k-1$.

\begin{lemma}
The state $\ket{\phi}$ corresponding to the adjacency matrix $\Gamma^n$ is $k \uni$, because for any nonzero multi-index $(w_1,\ldots,w_n ) \in \mathbb{Z}_\dloc^n $, the combination of Pauli strings
\begin{equation}
\label{eq9}
\prod_{i=1}^n \Big(X_i \sum_j (Z_j)^{\Gamma^{n}_{i,j}} \Big)^{w_i} \ ,
\end{equation}
has identity elements acting on at most $n-k-1$ different qudits.
\end{lemma}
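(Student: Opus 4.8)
The plan is to reuse the structure of the proof of \cref{lemmaXXX}, observing that the new block $B$ only modifies the $Z$-exponents on the last $n-k$ qudits, and to organise the case analysis so that those $B$-dependent exponents are needed only when the part of the multi-index supported on the last $n-k$ coordinates vanishes — in which case the $B$-contribution is identically zero anyway.

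First I would record the Pauli content of the product in \eqref{eq9}. Writing $\vec{w}=(\vec{w}_1,\vec{w}_2)$ with $\vec{w}_1\in\mathbb{Z}_\dloc^{k}$ and $\vec{w}_2\in\mathbb{Z}_\dloc^{n-k}$, and using the symmetry of $\Gamma^{n}$, the operator $\prod_{i=1}^n \big(X_i \prod_j (Z_j)^{\Gamma^{n}_{i,j}}\big)^{w_i}$ equals, up to a phase, $\bigotimes_\ell X_\ell^{w_\ell} Z_\ell^{(\Gamma^{n}\vec{w})_\ell}$, where $\Gamma^{n}\vec{w}=\big(-A\vec{w}_2,\ -A^{T}\vec{w}_1+B\vec{w}_2\big)$. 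Hence the $X$-exponents on the two blocks are $\vec{w}_1$ and $\vec{w}_2$, the $Z$-exponents on the first $k$ qudits are $-A\vec{w}_2$ (independent of $B$), and only the $Z$-exponents on the last $n-k$ qudits, $-A^{T}\vec{w}_1+B\vec{w}_2$, involve $B$. A qudit carries an identity operator iff both its $X$- and its $Z$-exponent vanish.

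Then I would split into two cases according to whether $\vec{w}_2=0$. If $\vec{w}_2=0$, then $\vec{w}_1\neq 0$, the $B$-term drops out, and the situation is exactly that of \cref{lemmaXXX}: letting $t=|\mathrm{supp}(\vec{w}_1)|$, there are at least $t$ non-identity qudits among the first $k$ (from $X_\ell^{w_\ell}$), and since $-A^{T}\vec{w}_1$ is a linear combination of $t$ rows of $A$ with nonzero coefficients, the Lemma gives at least $(n-k)-(t-1)$ non-identity qudits among the last $n-k$, for a total of at least $n-k+1$. If instead $\vec{w}_2\neq 0$, let $s=|\mathrm{supp}(\vec{w}_2)|$. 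When $s\ge k+1$, the $X$-exponents $\vec{w}_2$ already make at least $k+1$ of the last $n-k$ qudits non-identity. When $1\le s\le k$, the vector $-A\vec{w}_2$ is a linear combination of $s$ columns of $A$ with nonzero coefficients; applying the Lemma to $A^{T}$ (whose square submatrices are transposes of those of $A$, hence also nonsingular) shows $-A\vec{w}_2$ has at most $s-1$ zero entries, so at least $k-s+1$ of the first $k$ qudits are non-identity, while the $X$-exponents $\vec{w}_2$ make at least $s$ of the last $n-k$ qudits non-identity, for a total of at least $(k-s+1)+s=k+1$. In every case at least $k+1$ qudits carry a non-identity operator, hence at most $n-k-1$ carry the identity — here I use $k\le n/2$, so $n-k+1\ge k+1$ — and by \cref{prop:number-of-1} the state is $k\uni$.

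The main obstacle is precisely the presence of $B$: one cannot invoke \cref{lemmaXXX} verbatim, and the naive bound that uses only the $X$-parts gives merely $|\mathrm{supp}(\vec{w}_1)|+|\mathrm{supp}(\vec{w}_2)|$ non-identity qudits, which can be as small as $2$. The resolution is the observation above that the $Z$-exponents on the \emph{first} $k$ qudits are untouched by $B$, so whenever $\vec{w}_2\neq 0$ one bounds the non-identity count on that block through nonsingularity of the submatrices of $A^{T}$, never touching the $B$-contaminated exponents, while whenever $\vec{w}_2=0$ the $B$-term vanishes identically. The only auxiliary fact to nail down carefully is this ``column version'' of the Lemma (equivalently, that any $s\le k$ columns of $A$ are linearly independent and that a nonzero-coefficient combination of them cannot have $\ge s$ zero coordinates), which is immediate from the same singular-submatrix argument applied to $A^{T}$.
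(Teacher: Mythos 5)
Your proof is correct and follows essentially the same route as the paper's: the same case split on whether the part of the multi-index supported on the last $n-k$ coordinates vanishes, the same observation that the $Z$-exponents on the first $k$ qudits are $-A\vec{w}_2$ and hence untouched by $B$, and the same counting via the nonsingular-submatrix lemma (applied to $A^{T}$ for the column version, a point the paper leaves implicit but you rightly make explicit). No gaps.
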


\begin{proof}
Consider the nonzero multi-index $(w_1,\ldots,w_n ) \in \mathbb{Z}_\dloc^n $. Similarly as in the proof of \cref{lemmaXXX}, we shall consider two cases, either $(w_{k+1},\ldots,w_n )= 0$ and hence $(w_1,\ldots,w_k )\neq 0$, or $(w_{k+1},\ldots,w_n )\neq 0$. 

In the first case, the proof absolutely agrees with the proof of \cref{lemmaXXX}, since $\Gamma^{n}_{i,j} \equiv \Gamma^{(n,0)}_{i,j}$ for $i\leq k$. Consider now the second case, i.e., suppose that $(w_{k+1},\ldots,w_n )\neq 0$, and denote by $t$ the number of non-vanishing elements in the tuple $(w_{k+1},\ldots,w_n )$. Note that the operators $X_i$ will appear on at least $t$ positions out of the last $n-k$ positions in 
\begin{equation}
\label{eq7}
\prod_{i=k+1}^n \Big(X_i \sum_j (Z_j)^{\Gamma^{n}_{i,j}} \Big)^{w_i}.
\end{equation}
Note that the matrix $\Gamma^{n}_{i,j} \equiv \Gamma^{(n,0)}_{i,j} $ for all $j\leq k$, and it is related to the matrix $A$ by (\ref{eq:adjacency-Gamma-n,nI,nII,etc}). 
Since the matrix $A$ is nonsingular, the operators $Z_i$ will appear on at least $k-t+1$ positions out of the first $k$ positions in \cref{eq7}. 
Note that in 
\begin{equation}
\label{eq8}
\prod_{i=1}^k \Big(X_i \sum_j (Z_j)^{\Gamma^{n}_{i,j}} \Big)^{w_i},
\end{equation}
there are no $Z_i$ operators on the first $k$ positions and $X_i$ operators on the last $n-k$ positions. Obviously, (\ref{eq9})$=$(\ref{eq7})$\cdot$(\ref{eq8}), and hence the operator \cref{eq9} has non-identity elements on at least $k-t+1$ positions out of the first $k$ positions, and on at least $t$ positions among the last $n-k$ positions. In total, the number of non-identity elements in \cref{eq9} is greater than or equal to $k+1$, which finishes the proof. 
\end{proof}

Note that the proof is valid for the adjacency matrix of the form
 \begin{equation}
   \Gamma^n= \left[  \begin{array}{c|c}
    0 & -A^{\phantom{T}}  \\
    \hline
    -A^T & \color{blue} B\\
 \end{array} \right] \, ,
 \end{equation}
where the only conditions we considered are that $A$ is a nonsingular matrix, and $B$ is an arbitrary adjacency matrix. 
But in this paper, we focus on the hierarchical structure, therefore we choose a specific form for the $B$ matrix.
For example, in the first level of iteration  we have
  \begin{equation}
   B=  \left[
    \color{blue}
     \begin{array}{c|c}    
     0 & 0\\
      \hline
     0 & 
     \begin{array}{c|c}    
     0 & -{\AI}\\
      \hline
     -{\AI}^T & 0
      \end{array}   
      \end{array} \right] \ ,
 \end{equation}
and in the second level of iteration, it is
   \begin{equation}
   B '=  \left[
       \color{blue}
     \begin{array}{c|c}    
     0 & 0\\
      \hline
     0 & 
     \begin{array}{c|c}    
     0 & -{\AI}\\
      \hline
     -{\AI}^T & 
            \color{red}
         \begin{array}{c|c}    
     0 & 0\\
      \hline
     0 & 
     \begin{array}{c|c}    
     0 & -{\AII}\\
      \hline
     -{\AII}^T &  0
      \end{array}   
      \end{array}
      \end{array}   

      \end{array} \right] \ .
 \end{equation}
 And in general, the hierarchical method constructs $k \uni$ states at every level of iteration, i.e., the adjacency matrix

  \begin{align}
\Gamma^{(n,\nI,\nII,\dots)}
   =  
    \left[  \begin{array}{c|c}
    0 & -A^{\phantom{T}}  \\
    \hline
    -A^T & 
    \color{blue}
     \begin{array}{c|c}    
     0 & 0\\
      \hline
     0 & 
     \begin{array}{c|c}    
     0 & -{\AI}\\
      \hline
     -{\AI}^T & 
            \color{red}
         \begin{array}{c|c}    
     0 & 0\\
      \hline
     0 & 
     \begin{array}{c|c}    
     0 & -{\AII}\\
      \hline
     -{\AII}^T &  \color{violet}{\ddots}
      \end{array}   
      \end{array}
      \end{array}   
      \end{array}
    \\
 \end{array} \right] \ ,
  \end{align} 
corresponds to the states $\ket{\phi_{n,\nI,\nII,\dots}}$ that are $k \uni$. 
 
\section{Inequivalence under SLOCC}\label{app:inequivalence-SLOCC}
It is well known that the number of product states needed to specify a pure state is an upper bound to the rank of all possible reduced density matrices.
It is also known that this number cannot be increased by SLOCC operations. 
This implies that, for a $k$-UNI state $\ket{\phi_{(n,0)}}$, and for any subset $ S\subset \{1,\dots ,n\}$, one has
 \begin{equation} 
  \text{rank} (\rho_S) \leq q ^k \ ,
 \end{equation}
where $\rho_S=\Tr_{S^c} \ketbra{\phi_{(n,0)}}{\phi_{(n,0)}}$. 
This means that all the reductions up to $k$ parties of the state $\ket{\phi_{(n,0)}}$ are maximally mixed, i.e., it is a $k \uni$ state.
However, if one can find a pure state that is a  $k \uni$ state and in addition to this, there exists at least one subset of size larger than $k$ parties such that the reduced density matrix are maximally mixed, then these two $k \uni$ states cannot be converted into one another via SLOCC. 

In this section, we show that some reduced states $\rho_S =\Tr_{S^c} \ketbra{\phi_{n,\nI}}{\phi_{n,\nI}}$,  for subsystem size $|S|>k$, are maximally mixed.
More precisely, we show that $\rho_S$ for a subset $S$ comprised of two parts $S_1$ and $S_2$, is maximally mixed, where $S_1$ has size $|S_1|=k$ and is contained entirely in the support of the first $n-\nI$ qudits, i.e., the subset $\{1, \dots, n-\nI \}$, and $S_2$ (with size $|S_2|=\kI$) is contained in the last $\nI$ qudits, i.e, the subset $\{n-\nI+1, \dots, n \}$.
Concretely, we want to show 
  \begin{equation}\label{eqapp:reduced-dencity-SLOCC}
  \begin{split}
     \rho_S &=\Tr_{S^c} \ketbra{\phi_{n,\nI}}{\phi_{n,\nI}} \\
     & =\Tr_{S_1^c} \Tr_{S_2^c} \ketbra{\phi_{n,\nI}}{\phi_{n,\nI}} \\
     & \propto \1 \ .
   \end{split}
  \end{equation}

To proceed further, let us first review the structure of the state $\ket{\phi_{n,\nI}}$.
This state is obtained by applying operator  $\Op_{\nI}$ on the state $\ket{\phi_{n,0}}$, Eq.~\eqref{eq:state-n-nI}, and can be expanded as
   \begin{equation}\label{eqapp:general-state-n-nq}
   \begin{split}
    \ket{\phi_{n,\nI}} 
     &= \1^{(n-\nI)} \otimes \Op_{\nI} \ \ket{\phi_{n,0}} \\
    & = \sum_{i} \ket{c^{(i)}_1, \dots , c^{(i)}_{n-\nI}} 
    \otimes Z^{-c^{(i)}_{n-(\nI-1)}} \otimes \dots \otimes Z^{-c^{(i)}_{n-(\nI-\kI)} }\otimes X^{c^{(i)}_{n-(\nI-\kI)+1}}  \otimes \dots \otimes X^{c^{(i)}_n} \ \ket{\phi_{\nI,0}} \\
    &= \sum_{i} \ket{c^{(i)}_1, \dots , c^{(i)}_{n-\nI}}  \otimes \ket{\psi_{i}^{\nI}} \ ,
    \end{split}
   \end{equation}  
where as above we denote 
 \begin{equation}\label{eqapp:basis}
   \ket{\psi_{i}^{\nI}} \coloneqq Z^{-c^{(i)}_{n-(\nI-1)}} \otimes \dots \otimes Z^{-c^{(i)}_{n-(\nI-\kI)} }\otimes X^{c^{(i)}_{n-(\nI-\kI)+1}}  \otimes \dots \otimes X^{c^{(i)}_n} \ \ket{\phi_{\nI,0}} \ .
 \end{equation}
The state $\ket{\phi_{\nI,0}}$ is a $\kI \uni$ state constructed from classical codes using the method presented in Eq.~\eqref{eq:state-n,0}.
For every $i$, the state $\ket{\psi_{i}^{\nI}}$ is also a $\kI \uni$ state, as acting with local unitaries on a given state does not change the entanglement properties.
Moreover, it is proven in \cite[Lemma~1]{Zahra-non-min-support} that the states $\ket{\psi_{i}^{\nI}}$ for $i \in \{1,\dots ,\dloc^{\nI}\}$ form a complete orthonormal basis of $\kI \uni$ states with $\nI$ qudits, i.e., $\braket{\psi_{i}^{\nI}}{\psi_{i'}^{\nI}}=\delta_{i,i'}$.

To prove that the reduced density matrix $\sigma_S$ is maximally mixed for the given subset of size $|S|=k+\kI$, i.e, to show that Eq.~\eqref{eqapp:reduced-dencity-SLOCC} holds, we check two different cases: (i) $k \leq \nI$ and (ii) $k > \nI$.

  \begin{itemize}
    \item[case (i)] We first consider the case $k \leq \nI$, where we have
    \begin{equation}
       \begin{split}
          \rho_S
     & =\Tr_{S_1^c} \Tr_{S_2^c} \ketbra{\phi_{n,\nI}}{\phi_{n,\nI}} \\
     & = \Tr_{S_1^c} \sum_{i,i'}  \ketbra{c_1^{(i)},\dots , c_{n-\nI}^{(i)}}{c_1^{(i')},\dots , c_{n-\nI}^{(i')}}
     \otimes \Tr_{S_2^c} \ketbra{\psi_i^{\nI}}{\psi_{i'}^{\nI}} \ .
        \end{split}
    \end{equation}
We know that the states $\ket{\psi_{i}^{\nI}}$, Eq.~\eqref{eqapp:basis}, are a complete basis of $\kI \uni$ states.
When $k = \nI$ the set $i = \{1,\dots ,\dloc^k\}$ contains the entire basis, and otherwise in the case of having $k < \nI$, we have only part of the basis.
It is obvious that in both cases, $\Tr_{S_2^c} \ketbra{\psi_i^{\nI}}{\psi_i^{\nI}} = \delta_{i,i'} \ \1_{\kI}$. 
Therefore, the reduced density matrix simplifies to
 \begin{equation}
     \begin{split}
   \rho_S  &= \Tr_{S_1^c} \sum_{i,i'}  \ketbra{c_1^{(i)},\dots , c_{n-\nI}^{(i)}}{c_1^{(i')},\dots , c_{n-\nI}^{(i')}}
     \otimes \1_{\kI} \ \delta_{i,i'} \\
     & = \1_k \otimes \1_{\kI}\ .
        \end{split}
 \end{equation}
where we used the fact that $\vec{c}_i =(\vec{x}_i \, G_{k\times n})=(\vec{x}_i ,\ \vec{x}_i A )$ are codewords of a suitable classical code such that the number of free indices in the code is equal to $k$.

    \item[case (ii)] In the case where $k > \nI$, there can be a repetition in the basis $\ket{\psi_i^{\nI}}$, because the number of the states that form a complete orthonormal basis in a Hilbert space $\hiH(\nI,\dloc) \coloneqq \mathbb{C}_\dloc^{\otimes \nI}$, is equal to $\dloc^{\nI}$, while  $i \in \{1,\dots, \dloc^k\}$.
Therefore, to proceed with the proof, we consider the known conditions on the codewords $\vec{c}_i=(c_1^{(i)},\dots,c_n^{(i)})$ that are used to construct the state $\ket{\phi_{n,\nI}}$ and the orthonormal basis $\ket{\psi_{i}^{\nI}}$.

We can have two conditions depending on the exponents of the $X$ and $Z$ operators in Eq.~\eqref{eqapp:basis}.
One corresponds to the case that the two sequences $c_{n-(\nI-1)}^{(i)}, \dots, c_{n}^{(i)}$ and $c_{n-(\nI-1)}^{(i')}, \dots, c_{n}^{(i')}$ differ in at least one position.
In this case we get $\Tr_{S_{1}^{c}} \ketbra{\psi_{i}^{\nI}}{\psi_{i'}^{\nI}} = \delta_{ii'} \ \1_{\kI}$.
Therefore, as for case (i), we have $\rho_S = \1_k \otimes \1_{\kI}$.

In the second condition, the two sequences $c_{n-(\nI-1)}^{(i)}, \dots, c_{n}^{(i)}$ and $c_{n-(\nI-1)}^{(i')}, \dots, c_{n}^{(i')}$ that are part of the codewords $\vec{c}_i$ and $\vec{c}_{i'}$, do not differ.
Therefore, these parts of the codewords, with the size $\nI$, have overlap.

We know that, based on the properties of classical linear codes (see Appendix \ref{app:classical linear codes}), in general, every two codewords $\vec{c}_i$ and $\vec{c}_{i'}$ with $i \neq i'$ have a distance at least equal to the Singleton bound \eqref{eqapp:Singleton}, i.e., $\dH=n-k+1$.
As sequences of these codewords, with size $\nI$, have overlap, the remaining part is orthogonal
 \begin{equation}
   \Tr_{S_1^c} \sum_{i,i'}  \ketbra{c_1^{(i)},\dots , c_{n-\nI}^{(i)}}{c_1^{(i')},\dots , c_{n-\nI}^{(i')}} =  \delta_{i,i'} \Tr_{S_1^c} \sum_{i}  \ketbra{c_1^{(i)},\dots , c_{n-\nI}^{(i)}}{c_1^{(i)},\dots , c_{n-\nI}^{(i)}}  \ .
 \end{equation}
In other words, we used the fact that $\dH-\nI \geq |S_{1}^{c}| = n-\nI-k$, and considering this, there cannot be any overlap with the sequences $c_1^{(i)},\dots , c_{n-\nI}^{(i)}$ and $c_1^{(i')},\dots , c_{n-\nI}^{(i')}$ of the codewords.
Putting all this together, the reduced density matrix $\rho_S$ simplifies to
    \begin{equation}
       \begin{split}
          \rho_S
      &= \Tr_{S_1^c} \sum_{i}  \ketbra{c_1^{(i)},\dots , c_{n-\nI}^{(i)}}{c_1^{(i)},\dots , c_{n-\nI}^{(i)}}
     \otimes \Tr_{S_2^c} \ketbra{\psi_i^{\nI}}{\psi_{i}^{\nI}} \\
     & = \1_k \otimes \1_{\kI} \ ,
        \end{split}
    \end{equation}
in which we used the fact that every linear code contains $k$ free indices.   
  \end{itemize}
  
Now, we can conclude that for both cases $k \leq \nI$ and $k > \nI$, the reduced density matrix $\sigma_S$, for $S$ given by the union of $S_1$ and $S_2$, is maximally mixed, where $S_1 \subseteq \{1, \dots, n-\nI\}$ with $|S_1|=k$ and $S_2 \subseteq \{n-\nI+1, \dots, n\}$ with $|S_2|=\kI$.
Obviously, this can only be achieved if $|S|=k+\kI \leq n/2$. 

Here, we have only proven that the two sets of $k \uni$ states, $\ket{\phi_{(n,0)}}$ and $\ket{\phi_{(n,\nI)}}$, 
belong to different SLOCC classes, we leave the proof for the more general cases to future, in which we will show that all the $k \uni$ states $\ket{\phi_{(n,\nI)}}$ and $\ket{\phi_{(n,{\nI}')}}$ 
with $\nI \neq {\nI}'$ as well as those $k \uni$ states belonging to different hierarchy levels.

\section{Two inequivalent types of AME states under SLOCC}\label{app:inequivalence-SLOCC-AME}
In this section, we examine the problem of SLOCC-discrimination of different AME states with the same number of qudits $n$ and local dimension $\dloc$.
Here, we consider two states $\ket{\phi_{n,0}}$ and $\ket{\phi_{n,{2}}}$ of odd numbers of qudits $n=2k+1$ defined by Eq.~\eqref{eq:state-n-nI}.
We demonstrated in \cref{app:prooftheorems}, that if state $\ket{\phi_{n,0}}$ is an AME state, then the second state (i.e., the state $\ket{\phi_{n,{2}}}$) is also an AME state.
We shall see that they are not SLOCC-equivalent. 
Observe that for each subsystem, the reduced density matrices of both states always have the same rank.
Therefore the method used in \cref{app:inequivalence-SLOCC} for verification of SLOCC-equivalence between two $k \uni$ states with $k < n/2$, cannot be applied anymore.
In the following we explain the proof.

In order to simplify notation, we renormalize both states, $\ket{\phi_{n,0}}$ and $\ket{\phi_{n,{2}}}$, by the same factor $\sqrt{\dloc^k}$.

Firstly, we consider again the closed formula for the state $\ket{\phi_{n,0}}$ and its reduction $\rho_{S} ( \phi_{n,{0}}) = \ketbra{\phi_{n,{0}}}{\phi_{n,{0}}}$, to the subsystem $S=\{1,\ldots,k+1 \}$ that includes the first $(k+1)$ qudits. 
According to \cref{eq:kunimin-general}, the state $\ket{\phi_{n,0}}$ can be written as follows:
 \begin{equation}\label{eq:1}
   \ket{\phi_{n,0}}
   =\sum_{\vec{x}_i \in [ \dloc ]^{k}} \ket{\vec{x}_i,\, \vec{x}_i A} \ ,
 \end{equation}
where $[q] \coloneqq (0; \dots, q-1)$, and $A$ is $k \times (n-k)$ matrix with elements in $ GF(\dloc)$. 
We separate qudits $k$ and $k+1$ in the notation, thus \cref{eq:1} reads
 \begin{equation}\label{eq:2}
   \ket{\phi_{n,0}}=
   \sum_{\substack{ \vec{x}_i \in [ \dloc ]^{k-1} \\ x_j \in [\dloc ]}} 
   \underbrace{\ket{\vec{x}_i}}_{k-1} 
   \otimes 
     \underbrace{\ket{x_j}}_{1} 
        \otimes 
\underbrace{\ket{\vec{x}_i A_1+ a x_j}}_{1} 
   \otimes 
\underbrace{\ket{\vec{x}_i A_2+ x_j A_3}}_{n-k-1} 
 ,
 \end{equation}
where the matrix $A$ is separated into four submatrices
\begin{equation}\label{division}
A=
  \begin{bmatrix}
    \begin{array}{c|c}
  A_1 & A_2 \\
  \hline
 a & A_3
    \end{array}
  \end{bmatrix}
\end{equation}
where $A_1,A_2,A_3$ are of size $(k-1)\times 1$, $(k-1)\times (n-k-1)$, and $1\times (n-k-1)$ respectively, while $a \in GF (\dloc )$. 
Since $\ket{\phi_{n,0}}$ is an AME($2k+1$,$\dloc$) state, the states $\ket{\vec{x}_i A_2+ x_j A_3}$ are linearly independent for different values of $\vec{x}_i \in [ \dloc ]^{k-1},  x_j \in [\dloc ]$. 
In fact, the set of states $\{\ket{\vec{x}_i A_2+ x_j A_3}\}_{\vec{x}_i \in [ \dloc ]^{k-1},  x_j \in [\dloc ]} \in \hiH(k,\dloc)$ forms a basis in $\hiH(k,\dloc) \coloneqq \mathbb{C}_\dloc^{\otimes k}$. 
Therefore the reduced density matrix $\rho_{S} ( \phi_{n,{0}})$, with $S$ corresponding to the  first $k+1$ qudits reads
\begin{equation}\label{eq:6}
\rho_{S} \Big ( \phi_{n,{0}} \Big)   
= 
\sum_{\substack{ \vec{x}_i \in [ \dloc ]^{k-1} \\ x_j \in [\dloc ]}} 
\ket{\vec{x}_i}
\ket{x_j}
\ket{\vec{x}_i A_1+ a x_j}
\bra{\vec{x}_i}
\bra{x_j}
\bra{\vec{x}_i A_1+ a x_j}.
\end{equation}

Secondly, we study the state $\ket{\phi_{n,2}}$ and its reduction $\rho_{S} ( \phi_{n,{2}}) $ to the subsystem $S=\{1,\ldots,k+1 \}$ containing the first $(k+1)$ qudits. 
Note that the maximally entangled state of two qudits with the local dimension $\dloc$ is locally equivalent to the state
\begin{equation}
   \ket{\phi_{2,0}}=\dfrac{1}{\sqrt{\dloc}}\sum_{\ell \in [\dloc ]} \ket{\ell,\ell}.
\end{equation}
The state $\ket{\phi_{n,2}}$ is constructed from $\ket{\phi_{n,0}}$ and $\ket{\phi_{2,0}}$ via the non-local operator $\Op_{2}$ acting on the last two qudits of $\ket{\phi_{n,0}}$, as indicated in \cref{eq:Operator,eq:state-n-nI}. 
Because of the properties that $k \uni$ states have, after any permutation of its subsystem, without loss of generality, we can change the subspace of the action of the operator $\Op_{2}$ onto subsystem $\{k,k+1\}$.
 In this way, the state $\ket{\phi_{n,2}}$ reads
 \begin{equation}\label{eq:4}
   \ket{\phi_{n,2}}=
      \dfrac{1}{\sqrt{\dloc}}
   \sum_{\substack{ \vec{x}_i \in [ \dloc ]^{k-1} \\ x_j,\ell \in [\dloc ]}} 
   \underbrace{\ket{\vec{x}_i}}_{k-1} 
   \otimes 
     \underbrace{X^{x_j}\ket{\ell}}_{1} 
        \otimes 
\underbrace{Z^{\vec{x}_i A_1+ a x_j} \ket{\ell}}_{1} 
   \otimes 
\underbrace{\ket{\vec{x}_i A_2+ x_j A_3}}_{n-k-1} .
 \end{equation}
As we noticed before, states $\{\ket{\vec{x}_i A_2+ x_j A_3}\}_{\vec{x}_i \in [ \dloc ]^{k-1},  x_j \in [\dloc ]} \in \hiH(k,\dloc)$ form a basis of $\mathbb{C}_\dloc^{\otimes k}$. 
Therefore the state $\ket{\phi_{n,2}}$ reduced to the subsystem $S$ reads
\begin{equation}\label{eq:5}
\rho_{S} \Big( \phi_{n,{2}}\Big)   
= \dfrac{1}{\dloc}
\sum_{\substack{ \vec{x}_i \in [ \dloc ]^{k-1} \\ x_j \in [\dloc ]}} 
\sum_{\ell,\ell' \in [\dloc]}
\omega^{(\vec{x}_i A_1+ a x_j ) (\ell-\ell')}
\ket{\vec{x}_i}
\ket{ x_j +\ell}
\ket{\ell}
\bra{\vec{x}_i}
\bra{ x_j +\ell'}
\bra{\ell'},
\end{equation} 
where $\omega$ is a $\dloc$th root of unity. 
Note the difference in the last two positions in
$\rho_{S} ( \phi_{n,{0}} ) $ and $\rho_{S} ( \phi_{n,{2}} ) $ given in \cref{eq:6} and \cref{eq:5}, respectively. 
We shall show the following. 
\begin{lemma}
\label{lemmaXX}
Both reduced density matrices are LU-equivalent by the following transformation:
\begin{equation}
U=\underbrace{\1 \otimes\cdots \otimes \1}_{k-1} \otimes W \otimes V,
\end{equation}
where unitary matrices $W=[w_{ij}]_{i,j=1}^\dloc $ and $V=[v_{ij}]_{i,j=1}^\dloc $ are defined by $w_{ij}:=\frac{1}{\dloc} \omega^{-a ij+j^2 T}$ and $v_{ij}:=\frac{1}{\dloc}\omega^{ij-j^2 T}$ 
where $\omega$ is a $\dloc$-th root of unity, while $T$ is an element in $GF(\dloc )$ such that $2T=a$. 
\end{lemma}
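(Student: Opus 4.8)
The plan is to prove the operator identity $U\,\rho_{S}(\phi_{n,0})\,U^{\dagger}=\rho_{S}(\phi_{n,2})$ for $S=\{1,\dots,k+1\}$ by first removing the first $k-1$ qudits, then recognizing the blocks on the remaining two qudits as spectral projectors of a single Pauli operator, which reduces the whole statement to a short Clifford (Weil‑representation) computation. Concretely, $U$ acts as the identity on qudits $1,\dots,k-1$, and both $\rho_{S}(\phi_{n,0})$ [Eq.~\eqref{eq:6}] and $\rho_{S}(\phi_{n,2})$ [Eq.~\eqref{eq:5}] are block‑diagonal in the computational basis $\{\ket{\vec{x}_i}\}$ of those qudits, with the block on qudits $k,k+1$ depending on $\vec{x}_i$ only through the scalar $m:=\vec{x}_i A_1\in GF(\dloc)$. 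Since every $1\times1$ submatrix of $A$ is nonsingular, $A_1$ has no vanishing entry, so $m$ ranges over all of $GF(\dloc)$; hence it suffices to show $(W\otimes V)\,\tau_m\,(W\otimes V)^{\dagger}=\sigma_m$ for every $m$, where $\tau_m=\sum_{x_j}\ketbra{x_j,\,m+ax_j}{x_j,\,m+ax_j}$ is the block of \eqref{eq:6} and $\sigma_m$ the corresponding block of \eqref{eq:5}. (The permutation freedom used to place $\Op_{2}$ on qudits $\{k,k+1\}$ is the permutation symmetry of $k\uni$ states invoked just before the statement.)

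Next I would pass to Pauli form. The operator $\tau_m$ is the orthogonal projector onto the $\omega^{-m}$‑eigenspace of $Z^{a}\otimes Z^{-1}$, equivalently $\tau_m=\tfrac1{\dloc}\sum_{s}\omega^{ms}\,Z^{as}\otimes Z^{-s}$. On the other side, $\sigma_m=\sum_{x_j}\bigl(X^{x_j}\otimes Z^{m+ax_j}\bigr)\ketbra{\phi_{2,0}}{\phi_{2,0}}\bigl(X^{x_j}\otimes Z^{m+ax_j}\bigr)^{\dagger}$; expanding $\ketbra{\phi_{2,0}}{\phi_{2,0}}=\tfrac1{\dloc^{2}}\sum_{u,v}(X^{u}Z^{v})\otimes(X^{u}Z^{-v})$, conjugating, and performing the character sum over $x_j$ — which forces $v=ua$ and produces the weight $\omega^{mu}$ — gives $\sigma_m=\tfrac1{\dloc}\sum_{u}\omega^{mu}\,(X^{u}Z^{au})\otimes(X^{u}Z^{-au})$, i.e.\ the projector onto the $\omega^{-m}$‑eigenspace of $(XZ^{a})\otimes(XZ^{-a})$. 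Since unitary conjugation carries eigenprojectors to eigenprojectors with the same eigenvalue, the whole family $\{(W\otimes V)\tau_m(W\otimes V)^{\dagger}=\sigma_m\}_{m}$ is equivalent to the single operator identity $(W\otimes V)(Z^{a}\otimes Z^{-1})(W\otimes V)^{\dagger}=(XZ^{a})\otimes(XZ^{-a})$, which factorizes (uniqueness of tensor factorizations up to a scalar) into the two single‑qudit Clifford identities $WZ^{a}W^{\dagger}=\lambda\,XZ^{a}$ and $VZ^{-1}V^{\dagger}=\lambda^{-1}\,XZ^{-a}$ for a common phase $\lambda$.

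It remains to verify those two identities by evaluating the matrix elements $\bra{\mu'}WZ^{a}W^{\dagger}\ket{\mu}$ and $\bra{\mu'}VZ^{-1}V^{\dagger}\ket{\mu}$ directly from the explicit entries of $W$ and $V$. The sum over the contracted index is an exponential sum that collapses to a Kronecker delta in $\mu,\mu'$ — this is the $X$‑part of the output — while the quadratic chirp phases of $W$ and $V$ survive as its $Z$‑part, whose exponent is pinned to $a$ (resp.\ $-a$) precisely by the relation $2T=a$; without the chirp one would obtain only $X$‑powers. The hard part is purely the phase bookkeeping: one must keep track of the scalar prefactors and check that the phase coming from $W$ is the inverse of the one coming from $V$, so that they cancel on tensoring — this reciprocity is again forced by $2T=a$. (Along the way one uses that $2$ is invertible in $GF(\dloc)$, so that a $T$ with $2T=a$ exists, and that $W,V$ are normalized to be unitary.) Once the two Clifford identities are established, the proposition follows at once: conjugating $\tau_m=\tfrac1{\dloc}\sum_{s}\omega^{ms}Z^{as}\otimes Z^{-s}$ term by term, the residual phases cancel between the two tensor factors and one recovers exactly $\sigma_m$, hence $U\,\rho_{S}(\phi_{n,0})\,U^{\dagger}=\rho_{S}(\phi_{n,2})$.
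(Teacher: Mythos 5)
Your argument is correct, and it reaches the conclusion by a genuinely different route from the paper. The paper proves \cref{lemmaXX} by brute force: it applies $\1^{\otimes(k-1)}\otimes W\otimes V$ to the explicit expression \cref{eq:6}, substitutes $j=m-k$, $j'=m'-k'$, evaluates the character sum $\sum_{\ell}\omega^{-a\ell(j-j')}=\dloc\,\delta_{jj'}$, and matches the result term by term against \cref{eq:5}. You instead observe that both reductions are block-diagonal over the first $k-1$ qudits with blocks $\tau_m$ and $\sigma_m$ depending only on $m=\vec{x}_iA_1$, rewrite $\tau_m=\tfrac1\dloc\sum_s\omega^{ms}Z^{as}\otimes Z^{-s}$ and $\sigma_m=\tfrac1\dloc\sum_s\omega^{ms}(XZ^a)^s\otimes(XZ^{-a})^s$ as spectral projectors of a single Pauli product each, and thereby reduce the whole lemma to the two one-qudit Clifford covariance identities $WZ^aW^\dagger=\omega^T XZ^a$ and $VZ^{-1}V^\dagger=\omega^{-T}XZ^{-a}$, whose opposite scalar phases cancel on tensoring. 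I checked these identities and they hold; your reduction is sound, including the step that equality of all eigenprojectors with matching eigenvalues is equivalent to the single conjugation identity on $Z^a\otimes Z^{-1}$. What your approach buys is a conceptual explanation of the otherwise mysterious quadratic "chirp" phase: $W$ and $V$ are precisely the Clifford unitaries implementing $Z^{\pm a}\mapsto XZ^{\pm a}$, and the condition $2T=a$ is exactly what pins the $Z$-exponent of the output and enforces the phase reciprocity between the two factors. Two small caveats, both inherited from the paper rather than introduced by you: the verification of the two Clifford identities requires reading the quadratic phase in $w_{ij},v_{ij}$ as attached to the output (row) index, as in the paper's displayed computation (the entry formulas in the lemma statement, including the $1/\dloc$ normalization in place of $1/\sqrt{\dloc}$, are not internally consistent with that computation); and the collapse of $\sum_j\omega^{aj(\mu-\mu'+1)}$ to a Kronecker delta uses $a\neq 0$, which follows from the nonsingularity of all $1\times1$ submatrices of $A$ that you invoke earlier.
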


\begin{proof}
Observe that
\begin{align}
\nonumber
\rho_{S} \Big( \phi_{n,{0}}\Big)   
&= \nonumber
\sum_{\vec{x}_i \in [ \dloc ]^{k-1} } 
\sum_{\ell \in [ \dloc ] } 
\ket{\vec{x}_i}
\ket{\ell}
\ket{\vec{x}_i A_1+ a \ell}
\bra{\vec{x}_i}
\bra{\ell}
\bra{\vec{x}_i A_1+ a \ell} 
\xmapsto{\1^{\otimes (k-1)} \otimes W \otimes V} 
\\
&\sum_{\vec{x}_i \in [ \dloc ]^{k-1}} 
\ket{\vec{x}_i}
\bra{\vec{x}_i}
\Big(\dfrac{1}{\dloc^2}
\sum_{\substack{m,m', \\  k,k' \in [\dloc ]}} 
\sum_{\ell \in [\dloc ]} 
\omega^{\big(
-a\ell (m- m') +T (m^2-{m'}^2) +(\vec{x}_i A_1+ a \ell)(k- k') -T(k^2-{k'}^2)
\big)} 
\ket{m,k}\bra{m',k'} \Big) .
\label{eq:7}
\end{align}
\noindent
In the above expression, we change the summation indices $m,m'$ to $j:=m-k$ and $j':=m'-k'$ respectively. 
In this way, the right-hand side of \cref{eq:7} reads
\begin{align}
\nonumber
\text{(\ref{eq:7})}=&
\sum_{\vec{x}_i \in [ \dloc ]^{k-1}} 
\ket{\vec{x}_i}
\bra{\vec{x}_i}
\\&\nonumber
\sum_{\substack{k,k', \\  j,j' \in [\dloc ]}} 
\Bigg(\dfrac{1}{\dloc^2}
\sum_{\ell \in [\dloc ]} 
\omega^{\big(
-a\ell (j+k- j'-k') +T ((j+k)^2-{(j'+k')}^2) +(\vec{x}_i A_1+ a \ell)(k- k') -T(k^2-{k'}^2)
\big)}
\Bigg)
\ket{k+j,k}\bra{k'+j',k'}
\\&\nonumber
=\sum_{\vec{x}_i \in [ \dloc ]^{k-1}} 
\ket{\vec{x}_i}
\bra{\vec{x}_i}
\sum_{\substack{k,k', \\  j,j' \in [\dloc ]}} 
\Bigg(\dfrac{1}{\dloc^2}
\underbrace{\sum_{\ell \in [\dloc ]} 
\omega^{-a \ell (j-j')}}_{=\dloc \delta_{j j'}}
\omega^{T (j^2-{j'}^2)+ 2T (kj-k'j')}
\omega^{(\vec{x}_i A_1)(k- k')} 
\Bigg)
\ket{k+j,k}\bra{k'+j',k'}
\\&\nonumber
=\sum_{\vec{x}_i \in [ \dloc ]^{k-1}} 
\ket{\vec{x}_i}
\bra{\vec{x}_i}
\sum_{\substack{k,k', \\  j,j' \in [\dloc ]}} 
\Bigg(\delta_{j j'}\dfrac{1}{\dloc} 
\omega^{(2Tj+\vec{x}_i A_1) (k-k')} 
\Bigg)
\ket{k+j,k}\bra{k'+j',k'}
\\&\nonumber
=\sum_{\vec{x}_i \in [ \dloc ]^{k-1}} 
\ket{\vec{x}_i}
\bra{\vec{x}_i}
\sum_{k,k' \in [\dloc ]} 
\sum_{j\in [\dloc ]} 
\Bigg(\dfrac{1}{\dloc} 
\omega^{(aj+\vec{x}_i A_1) (k-k')} 
\Bigg)
\ket{k+j,k}\bra{k'+j,k'},
\end{align}
which is equal to $\rho_{S} ( \phi_{n,{2}}) $ (compare with \cref{eq:5}).
\end{proof}

Now, we recall the notion of a \textit{monomial matrix}, and the \emph{support of a state}.
The \textit{support} of a state $\ket{\psi}$ is the number of non-zero coefficients when $\ket{\psi}$  is written in the computational basis.

\begin{definition}
A unitary matrix $M$ is called a \textit{unitary monomial matrix} if one of the following holds:
\begin{enumerate}
\item $M$ has exactly one nonzero entry in each row and each column,
\item $M$ is a product of a permutation and diagonal matrix,
\item $M$ does not change the support of any quantum state.
\end{enumerate}
\end{definition}

\begin{proof}[Proof of Theorem \ref{TheoremAppF}]
We show the statement by contradiction. 
Assume that states $\ket{\phi_{n,0}}$ and $\ket{\phi_{n,2}}$ are LU-equivalent by some unitary matrices: 
\begin{equation*}
U_1 \otimes \cdots \otimes U_{k+1} \otimes U_{k+2} \otimes \cdots \otimes U_{n}
\end{equation*}
We shall keep "$\otimes$" in the notation in order to distinguish it from matrix multiplication. 
Since the (partial) trace is invariant under cycling permutations, one can show that:
\begin{equation}
\rho_{S} ( \phi_{n,{0}}) 
=
\Big( U_1 \otimes \cdots \otimes U_{k+1} \Big)
\rho_{S} ( \phi_{n,{2}}) 
\Big( U_1^\dagger \otimes \cdots \otimes U_{k+1}^\dagger \Big)
\end{equation}
\noindent
Hence, the operator $ \widetilde{U}:=  ( U_1 \otimes \cdots \otimes U_{k+1} )$ provides the local equivalence between $\rho_{S} ( \phi_{n,{0}}) $ and $\rho_{S} ( \phi_{n,{2}}) $. 
Notice that in \cref{lemmaXX}, we pointed out that the local matrices $U=\1^{\otimes k-1} \otimes  W \otimes V$ provide the LU-equivalence between $\rho_{S} ( \phi_{n,{0}}) $ and $\rho_{S} ( \phi_{n,{2}}) $. 
Therefore $\widetilde{U} U $ provides an LU-equivalence between $\rho_{S} ( \phi_{n,{0}}) $ and itself, i.e.
\begin{equation}
\rho_{S} ( \phi_{n,{0}}) 
=
\widetilde{U} U
\rho_{S} ( \phi_{n,{0}})
U^\dagger \widetilde{U}^\dagger
. 
\end{equation}
Note that the operator $\widetilde{U} U $ has the following form
\begin{equation}
\label{monomial}
\widetilde{U} U = U_1 \otimes \cdots \otimes U_{k-1} \otimes U_{k} W \otimes U_{k+1} V.
\end{equation}
According to \cite[Proposition 2]{BurchardtRaissi20}, all matrices in \cref{monomial}, i.e. $U_1,\ldots  , U_{k-1} ,U_{k} W ,U_{k+1} V$ are monomial matrices. 

We shall see that such a restriction on matrices $U_1,\ldots  , U_{k-1} ,\ U_{k} W ,\ U_{k+1} V$ leads to a contradiction. 
To sum up the discussion so far, LU-equivalence between states $\ket{\phi_{n,0}}$ and $\ket{\phi_{n,2}}$ has the following form:
\begin{equation*}
M_1 \otimes \cdots \otimes M_{k-1} 
\otimes 
M_{k} W \otimes M_{k+1} V
\otimes
U_{k+2} \otimes \cdots \otimes U_{n}
\end{equation*}
where $U_i$ are arbitrary unitary matrices, $W,V$ are defined in \cref{lemmaXX}, while $M_i$ are products of diagonal and permutation matrices. 
From the form of \cref{eq:2}, we have
\begin{equation}
\label{star}
   \ket{\phi_{n,2}}=
   \sum_{\substack{ \vec{x}_i \in [ \dloc ]^{k-1} \\ x_j \in [\dloc ]}} 
   \underbrace{\vec{M}\ket{\vec{x}_i}}_{k-1} 
   \otimes 
     \underbrace{B_{\vec{x}_i,x_j}}_{2} 
   \otimes 
\underbrace{ \vec{U} \ket{\vec{x}_i A_2+ x_j A_3}}_{n-k-1} ,
\end{equation}
where $\vec{M}=M_1 \otimes \cdots \otimes M_{k-1} $, and $\vec{U}=U_{k+2} \otimes \cdots \otimes U_{n} $, and 
\begin{equation}
B_{\vec{x}_i,x_j}:= 
M_{k} W \ket{x_j} 
\otimes 
M_{k+1} V \ket{\vec{x}_i A_1+ a x_j}.
\end{equation}
Since the matrix $A$ is nonsingular, its $k\times k $ submatrix $\left[\begin{smallmatrix}A_2\\  \hline A_3\end{smallmatrix}\right]$ is also non-singular. As a consequence the following $n-k-1=k$ states: 
$
\ket{\vec{x}_i A_2+ x_j A_3}
$ 
are linearly independent for different values of multi-indices $\vec{x}_i, x_j \in [\dloc^k ]$. Similarly, their unitary transformation 
$
 \vec{U} \ket{\vec{x}_i A_2+ x_j A_3}
$ 
remains linearly independent. Furthermore, from the form of matrices $V$ and $W$, the support of the state $B_{\vec{x}_i,x_j}$ equals $\dloc^2$ for any multi-index $\vec{x}_i, x_j \in [\dloc^k ]$. Therefore, we conclude that the support of a state (\ref{star}) equals at least $\dloc^{k+2}$.

On the other hand, from \cref{eq:4}, we have
\begin{equation}
\label{comp2}
   \ket{\phi_{n,2}}=
   \sum_{\substack{ \vec{x}_i \in [ \dloc ]^{k-1} \\ x_j \in [\dloc ]}} 
   \underbrace{\ket{\vec{x}_i}}_{k-1} 
   \otimes 
     \underbrace{C_{\vec{x}_i,x_j}}_{2} 
   \otimes 
\underbrace{\ket{\vec{x}_i A_2+ x_j A_3}}_{n-k-1} ,
 \end{equation}
where
\begin{equation}
C_{\vec{x}_i,x_j}=
      \dfrac{1}{\sqrt{\dloc}}\sum_{\ell \in [\dloc ]}
     \ket{\ell +x_j}
        \otimes 
\omega^{(\vec{x}_i A_1+ a x_j) \ell} \ket{\ell}.
\end{equation}
Clearly, the support of a state \cref{comp2} equals to $\dloc^{k+1}$. As a consequence, the support of states in \cref{star} and \cref{comp2} differs. Note that both equations present the same state $\ket{\phi_{n,2}}$, hence should have the same support. This ends the proof.
\end{proof}

In such a way, we have shown that two families of AME($n$,$\dloc$) states: $\ket{\phi_{n,0}}$ and $\ket{\phi_{n,2}}$ for $n=2k+1$ are not SLOCC-equivalent and hence  LU equivalent.

\end{widetext}

\end{document}